\documentclass[onecolumn, 11pt]{article}
\usepackage{amsmath,amssymb,amsthm}
\usepackage{bm}
\usepackage{url}
\usepackage{tikz}

\allowdisplaybreaks[4]

\addtolength{\oddsidemargin}{-.875in}
\addtolength{\evensidemargin}{-.875in}
\addtolength{\textwidth}{1.75in}

%\addtolength{\topmargin}{-.875in}
\addtolength{\topmargin}{-0.9in}
\addtolength{\textheight}{1.50in}

%\usepackage{setspace}
%\singlespacing
%\onehalfspacing

\newtheorem{theorem}{Theorem}
\newtheorem{lemma}[theorem]{Lemma}
\newtheorem{corollary}[theorem]{Corollary}
\theoremstyle{definition}
\newtheorem{definition}[theorem]{Definition}

\theoremstyle{remark}

\title{Peeling Algorithm on Random Hypergraphs with Superlinear Number of Hyperedges}
\author{Ryuhei~Mori~~and~~Osamu~Watanabe}
\date{
Tokyo Institute of Technology, Tokyo, Japan\\
email: mori@is.titech.ac.jp, watanabe@is.titech.ac.jp}

%\author{\IEEEauthorblockN{Ryuhei~Mori and Osamu~Watanabe}
%\IEEEauthorblockA{
%Tokyo Institute of Technology, Tokyo, Japan\\
%email: mori@is.titech.ac.jp, watanabe@is.titech.ac.jp}}

%\title{Certificate Length of 
%Randomly Generated Planted Constraint Satisfaction Problems 
%for Message-Passing Verifiers\footnote{This work was supported by MEXT KAKENHI Grant Number 24106008.}}

%\titlerunning{Certificate Length for Randomly Generated Planted CSPs for Message-Passing Verifiers}
\if0
\author[1]{Ryuhei Mori}
\author[2]{Takeshi Koshiba}
\author[3]{Osamu Watanabe}
\author[4]{Masaki Yamamoto}

%\if0
%\thanks{This work was supported by MEXT KAKENHI Grant Number 24106008.}%
%\fi

\affil[1]{Department of Mathematical and Computing Science, Graduate School of Information Science and Engineering,
Tokyo Institute of Technology\\
Shibaura, Minato-ku, Tokyo, 108-0023 Japan\\
\texttt{mori@is.titech.ac.jp}}

\affil[2]{Division of Mathematics, Electronics and Informatics,
Graduate School of Science and Engineering, Saitama University\\
255 Shimo-Okubo, Sakura-ku, Saitama, 338-8570 Japan\\
\texttt{koshiba@mail.saitama-u.ac.jp}}

\affil[3]{Department of Mathematical and
Computing Science, Graduate School of Information Science and
Engineering, Tokyo Institute of Technology\\
Ookayama, Meguro-ku,
Tokyo, 152-0038 Japan\\
\texttt{watanabe@is.titech.ac.jp}}

\affil[4]{Department of Computer and Information Science, Seikei University\\
Musashino-shi, Tokyo, 180-8633 Japan\\
\texttt{yamamoto@st.seikei.ac.jp}}
\fi

%\authorrunning{R.\,Mori, T.\,Koshiba, O.\,Watanabe, and M.\,Yamamoto}

%\Copyright{Ryuhei Mori, Takeshi Koshiba, Osamu Watanabe, and Masaki Yamamoto}
%mandatory, please use full first names. LIPIcs license is "CC-BY";  http://creativecommons.org/licenses/by/3.0/

%\subjclass{Dummy classification -- please refer to \url{http://www.acm.org/about/class/ccs98-html}}% mandatory: Please choose ACM 1998 classifications from http://www.acm.org/about/class/ccs98-html . E.g., cite as "F.1.1 Models of Computation".
%\subjclass{F.2.0 Analysis of Algorithms and Problem Complexity--General}

%\keywords{Constraint satisfaction problem, limited nondeterminism, Goldreich's generator, message-passing algorithm, linear programming relaxation}

%\keywords{Goldreich's generator, pseudorandom generator, linear programming relaxation, peeling algorithm.}

\begin{document}
\maketitle

\begin{abstract}
When we try to solve a system of linear equations, we can consider a simple iterative algorithm
in which an equation including only one variable is chosen at each step, and the variable is fixed to the value satisfying the equation.
The dynamics of this algorithm is captured by the peeling algorithm.
Analyses of the peeling algorithm on random hypergraphs are required for many problems, e.g.,
the decoding threshold of low-density parity check codes, the inverting threshold of Goldreich's pseudorandom generator,
the load threshold of cuckoo hashing, etc.
%On the other hand, the peeling algorithm gives natural generalization of the concepts of connectivity and connected components for hypergraphs.
In this work, we deal with random hypergraphs including superlinear number of hyperedges, and derive the tight threshold for the succeeding of the peeling algorithm.
For the analysis, Wormald's method of differential equations, which is commonly used for analyses of the peeling algorithm on
random hypergraph with linear number of hyperedges, cannot be used due to
the superlinear number of hyperedges.
A new method called the evolution of the moment generating function is proposed in this work.
\end{abstract}

\if0
In this paper, certificate length of satisfiability for randomly generated 
constraint satisfaction problem (CSP) with a planted assignment for message-passing type verifiers is analyzed.
Feige and Ofek showed that the planted random CSP with $(r-1)$-wise independent constraints including
$m(n)=\Omega(n^{r/2}\log n)$ clauses can be solved with high probability in polynomial time without certificate.
By extending Feige and Ofek's idea, it is clear that there is a certificate of length $\Omega\left(n^{1/2+\delta/(r-2)}\right)$
when the number of clauses is $m(n)=O(n^{r/2-\delta})$ for $\delta\in(0,(r-2)/2)$ when $r\ge 3$.
In this paper, we consider a class of message-passing type verifiers which works in polynomial time, and show
the tight threshold of the constant factor of the certificate length for the message-passing type verifiers.
More precisely, there is a constant
$\mu_\mathrm{c}(k,r):=\binom{k}{r}^{-1}\frac{(r-2)^{r-2}}{r(r-1)^{r-1}}$ such that
for $\mu > \mu_{\mathrm{c}}(k,r)^{1/(r-2)}$, 
the random planted CSP with special $(r-1)$-wise independent constraints can be verified by a message-passing algorithm
 using a certificate of length
 $\mu n^{\frac12 + \delta/(r-2)}$ with high probability,
and that
for $\mu < \mu_{\mathrm{c}}(k,r)^{1/(r-2)}$, 
the random planted CSP with any $(r-1)$-wise independent constraints cannot be verified by any message-passing algorithm
 using a certificate of length
 $\mu n^{\frac12 + \delta/(r-2)}$ with high probability.
Furthermore, the above statement also holds for the BasicLP verifier.
This paper essentially analyzes the peeling algorithm on random hypergraph, which may appear in other areas as well.
The randomly generated CSP with a planted assignment is also related closely to Goldreich's one-way functions and pseudorandom generators.
\end{abstract}
\fi

\section{Introduction}
The peeling algorithm is a simple message passing algorithm on hypergraph, which has been used for
analysis of many practical problems, e.g.,
the decoding of low-density parity-check codes~\cite{Luby:1997:PLC:258533.258573},
 the satisfiability and clustering phase transition of random $k$-XORSAT~\cite{ibrahimi2011set},
 load threshold of cuckoo hashing~\cite{dietzfelbinger2010tight},
invertible Bloom lookup table~\cite{mitzenmacher2013simple}, etc.
The peeling algorithm works on a bipartite graph representation of a hypergraph consisting of vertex nodes and hyperedge nodes.
In the $d$-peeling algorithm, hyperedge nodes of degree at most $d-1$ are iteratively removed.
In this work, we consider the peeling algorithm on randomly generated $k$-uniform hypergraph with superlinear number of hyperedges where sublinear number of vertices are initially removed.
Problems of this type were considered in~\cite{watanabe2013mp}, \cite{coja2012propagation}.
The results of this paper are useful for analyses of message passing algorithm for planted MAX-$k$-LIN and planted uniquely extendible constraints satisfaction problems~\cite{watanabe2013mp}, \cite{connamacher2012exact}
 and the inverting algorithm for Goldreich's generator~\cite{odonnell2014goldreich}.
For analyses of the peeling algorithm, two methods have been used in the previous works: the density evolution~\cite{mct} and Wormald's method of differential equation~\cite{Luby:1997:PLC:258533.258573}, \cite{wormald1995differential}.
The density evolution is not available on our setting since the hypergraph is not locally tree due to the superlinear number of hyperedges.
Wormald's method is also not available since the numbers of hyperedges with particular degrees in the peeling process are highly biased, e.g.,
the number of degree-1 hyperedge nodes is sublinear while the number of degree-3 hyperedge nodes is superlinear.
The analysis in this work is founded on the Markov chain of the number of hyperedge nodes which has been also used in Wormald's method~\cite{Achlioptas2001159}, \cite{connamacher2012exact}.
We analyze the peeling algorithm by introducing the evolution of the moment generating function, which gives the precise analysis of the behavior of the peeling algorithm.

\if0
In~\cite{schmidt1985component}, the phase transition phenomenon of connected components of random $k$-uniform hypergraph is shown,
 i.e., if the number $m(n)$ of hyperedges is $\mu n$ for arbitrary constant $\mu>1/(k(k-1))$,
the maximum size of connected component is $(1-\rho_{\mu}) n + o(n)$ with high probability as $n\to\infty$ for some constant $\rho_{\mu}\in(0,1)$,
and if $\mu < 1/(k(k-1))$, the maximum size of connected component is $\Theta(\log n)$ with high probability.
The phase transition phenomenon of connected components can be translated to the phase transition phenomenon of the peeling algorithm with $r=2$ as follows.
If $m(n)= \mu n$ for $\mu>1/(k(k-1))$, the $k$-peeling algorithm in which randomly chosen $\ell$ vertices are initially removed for some constant $\ell$ removes $(1-\rho_{\mu})n + o(n)$ 
vertices with probability $1-\rho_{\mu}^\ell + o(1)$ and
%vertices with high probability and
 if $\mu < 1/(k(k-1))$, the $k$-peeling algorithm removes at most $O(\log n)$ vertices with high probability for any constant $\ell$.
In this paper, we generalize the above result for $r\ge 3$ and $\ell\in \omega(1)\cap o(n)$.
In contrast to the case $r=2$, the order of $m(n)$ for the phase transition depends on $\ell$ for $r\ge 3$.
%On the other hand, when $r\ge 3$, there is a phase transision phenomenon
%In this paper, we analyze the peeling algorithm
%on randomly generated hypergraph with polynomially many hyperedges.
\fi

\section{Main results}
%In this work, we consider randomly generated $k$-uniform hypergraph including $n$ vertices and $m(n)$ hyperedges.
In this work, we consider randomly generated hypergraphs.
\begin{definition}[Random hypergraph]
A random hypergraph $G_k(n, m(n), \ell(n))$ is defined by the following generating process.
First, $k$-uniform hypergraph is generated by choosing $m(n)$ hyperedges independently and uniformly 
from all of the $\binom{n}{k}$ size-$k$ subsets of $n$ vertices.
Second, $\ell(n)$ randomly chosen vertices are removed from the $k$-uniform hypergraph.
Equivalently, we can assume that the $\ell(n)$ vertices with smallest indices are removed.
\end{definition}
%After the generation of a $k$-uniform hypergraph, $\ell(n)\in\omega(1)\,\cap\, o(n)$ randomly chosen vertices are removed.
%The corresponding random hypergraph is denoted by $G_k(n,m(n),\ell(n))$.
In this paper, we always assume $\ell(n)\in\omega(1)\,\cap\,o(n)$.
For a given hypergraph generated randomly as above, the $d$-peeling algorithm, that we consider in this paper, is an algorithm
iteratively removing hyperedge nodes of degree at most $d-1$ until no
such node exists (see Definition~\ref{def:peeling} for the formal definition).
%Let us consider the $k$-peeling algorithm on $G_k(n,m(n),\ell(n))$.
%Since the $k$-peeling algorithm removes all connected components of $G_k(n,m(n),0)$ including 
%at least one of the $\ell(n)$ vertices which will be removed, 
%the behavior of the $k$-peeling algorithm is determined only by the connectivity of $G_k(n,m(n),0)$
%whose phase transition phenomenon is well understood~\cite{schmidt1985component} (See also Appendix~\ref{apx:r2}).
The behavior of the $k$-peeling algorithm on
$G_k(n,m(n),\ell(n))$ is essentially
determined by the connectivity of the random $k$-uniform hypergraph $G_k(n,m(n),0)$
since 
vertices of $G_k(n,m(n),\ell(n))$ removed by the $k$-peeling algorithm are
those which were connected to some of the $\ell(n)$ vertices removed from the random $k$-uniform hypergraph.
Hence, the asymptotic behavior of the $k$-peeling algorithm on $G_k(n,m(n),\ell(n))$ is 
derived from the phase transition phenomenon of the connectivity of $G_k(n,m(n),0)$~\cite{schmidt1985component} (See also Appendix~\ref{apx:r2}).
In this work, we show the phase transition phenomenon of the $d$-peeling algorithm for $d\le k-1$.
Let the threshold constant  be
%\begin{equation*}
$\mu_\mathrm{c}(k,r):=\binom{k}{r}^{-1}\frac{(r-2)^{r-2}}{r(r-1)^{r-1}}$.
%\end{equation*}
The followings are the main results of this paper.
\begin{theorem}\label{thm:main00}
Let $m(n)=\mu \frac{n^{r-1}}{\ell(n)^{r-2}}$ for arbitrary constant $\mu > \mu_{\mathrm{c}}(k,r)$.
%Then, the $(k-r+2)$-peeling algorithm removes $n-o(n)$ vertices of $k$-uniform random hypergraph with $n$ vertices, $m(n)$ hyperedges and $\ell(n)$ uniformly chosen removed vertices
Then, the $(k-r+2)$-peeling algorithm removes $n-o(n)$ vertices of $G_k(n,m(n),\ell(n))$
with high probability for $r\in\{3,\dotsc,k\}$.
In addition, if $m(n)=\omega(n\log n)$, i.e., $\ell(n)=o(n/(\log n)^{1/(r-2)})$, the $(k-r+2)$-peeling algorithm removes all vertices of $G_k(n,m(n),\ell(n))$ with high probability.
\end{theorem}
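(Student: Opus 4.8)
The overall plan is to reduce the $(k-r+2)$-peeling dynamics to a single cascade on the random hypergraph and to analyse that cascade through the evolution, as the number of deleted vertices grows, of the moment generating function of the hyperedges' ``deleted-overlap'' profile.

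\textbf{Step 1 (reformulation as a cascade).} I would first note that the set of vertices eventually deleted by the algorithm is the closure $\bar R$ of the initial $\ell(n)$ deleted vertices under the rule ``whenever $|e\cap\bar R|\ge r-1$ then $e\subseteq\bar R$'', and that $\bar R$ is independent of the processing order. Thus I may expose $\bar R$ one vertex at a time via a queue: let $R(s)$ be the current set of $s$ deleted vertices and $N_j(s)$ the number of hyperedges with exactly $j$ vertices in $R(s)$; when a vertex $v$ is popped and deleted, each hyperedge through $v$ whose overlap reaches $r-1$ ``fires'' and pushes its remaining $k-r+1$ vertices onto the queue. Hyperedges of overlap $\le r-2$ never fire, so the relevant quantities are $N_0(s),\dots,N_{r-2}(s)$ together with the queue length $Q_s$ (the number of doomed-but-undeleted vertices); the algorithm deletes exactly $s^{*}$ vertices, where $s^{*}$ is the first $s$ with $Q_s=0$, and it deletes all of them iff $Q_s>0$ for every $s<n$.

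\textbf{Step 2 (drift, the generating function, the threshold).} Since the hypergraph is uniformly random and a surviving hyperedge of overlap $j\le r-2$ carries no revealed information beyond its $j$ deleted vertices, the popped vertex is, to leading order, as likely to lie in it as a uniform present vertex; this yields $\mathbb E[N_j(s+1)-N_j(s)\mid\mathcal F_s]\approx\big((k-j+1)N_{j-1}(s)-(k-j)N_j(s)\big)/(n-s)$. Encoding the $N_j$ in $M_s(z):=\sum_{j=0}^{r-2}N_j(s)z^j=\sum_{e\ \mathrm{surviving}}z^{|e\cap R(s)|}$ converts this to a linear first-order PDE in $(z,s)$, which I would integrate along characteristics; with the $s=\ell$ initial data (overlaps $\approx\mathrm{Binomial}(k,\ell/n)$) the solution is $M_s(z)\approx m\sum_{j=0}^{r-2}\binom{k}{j}(s/n)^j(1-s/n)^{k-j}z^j$, so $N_{r-2}(s)\approx m\binom{k}{r-2}(s/n)^{r-2}(1-s/n)^{k-r+2}$. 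Each pop removes one doomed vertex and on average creates $(k-r+2)N_{r-2}(s)/(n-s)$ fires, hence $\mathbb E[Q_{s+1}-Q_s\mid\mathcal F_s]\approx-1+(k-r+1)(k-r+2)N_{r-2}(s)/(n-s)$. Using $(k-r+1)(k-r+2)\binom{k}{r-2}=r(r-1)\binom{k}{r}$, $m=\mu n^{r-1}/\ell^{r-2}$, and $Q_\ell\approx(k-r+1)\binom{k}{r-1}\mu\ell=r\binom{k}{r}\mu\ell$, summing this drift over the critical window $s=x\ell$ with $x=\Theta(1)$ produces (after the $O(\ell)$ terms cancel) $Q_{x\ell}\approx\ell\,\psi(x)$ where
\[
  \psi(x)=1-x+r\tbinom{k}{r}\mu\,x^{r-1},\qquad
  \min_{x>0}\psi(x)=1-\tfrac{r-2}{r-1}\Big(r(r-1)\tbinom{k}{r}\mu\Big)^{-1/(r-2)},
\]
which is strictly positive exactly when $\mu>\mu_{\mathrm c}(k,r)$. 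For $\ell\ll s\le(1-\varepsilon)n$ the drift of $Q_s$ is positive and in fact tends to $\infty$ (as $m/n\to\infty$), so once the window is cleared $Q_s$ only grows and the process cannot halt before $n-s=o(n)$; thus $s^{*}=n-o(n)$ whp, the first assertion.

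\textbf{Step 3 (concentration --- the main obstacle).} Wormald's method is inapplicable here because the variables live on incompatible scales --- $N_0(s)=\Theta(m)$ is superlinear, $N_{r-2}(s)=\Theta(n)$, but $Q_s=\Theta(\ell)$ is sublinear --- so no common normalisation satisfies its hypotheses. Instead I would obtain concentration from the \emph{evolution of the moment generating function}: with $z$ a parameter, $M_s(z)$ and the product moments $M_s(z_1)M_s(z_2)$ satisfy drift relations coming from the same PDE, and $M_s(z)-\overline{M}_s(z)$ is a martingale up to a lower-order bias whose quadratic variation is controlled by that PDE; a Freedman/Azuma-type inequality then gives $|M_s(z)-\overline{M}_s(z)|=o(\overline{M}_s(z))$ uniformly over a suitable set of $z$, and extracting the $z^{r-2}$ coefficient (via a Cauchy integral) transfers this to $N_{r-2}(s)$ and to the drift of $Q_s$. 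As $Q_s$ is a sum of $O(\ell)$ increments it then concentrates within $o(\ell)$ of $\ell\,\psi(s/\ell)$, which stays above a positive constant times $\ell$ when $\mu>\mu_{\mathrm c}$, so $Q_s>0$ throughout. I expect this step to be the hardest part: making the bounds uniform in $z$, taming the rare large vertex degrees, and patching the sublinear window $s=\Theta(\ell)$, the bulk $s=\Theta(n)$, and the endgame $n-s=o(n)$ into a single argument.

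\textbf{Step 4 (deleting every vertex).} When $m(n)=\omega(n\log n)$ I would rule out an early stop. At termination the surviving set $S$ is \emph{stable}: every hyperedge has no vertex, or $\ge k-r+2$ vertices, in $S$; in particular none has exactly one. For fixed $S$ with $|S|=t\le\varepsilon n$ this fails for a single hyperedge with probability $\ge\Pr[|e\cap S|=1]\asymp kt/n$, so $\Pr[S\text{ stable}]\le(1-ct/n)^m\le e^{-cmt/n}$, and $\sum_{t=1}^{\varepsilon n}\binom{n}{t}e^{-cmt/n}\le\sum_{t\ge1}\exp\!\big(t\ln(en/t)-cmt/n\big)=o(1)$ since $m/n=\omega(\log n)$ forces $cm/n>2\ln(en/t)$ for all $t\ge1$. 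As Steps 2--3 give $|S|=n-s^{*}=o(n)\le\varepsilon n$, the stable set $S$ must be empty, i.e., all $n$ vertices are deleted.
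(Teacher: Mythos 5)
Your reduction and drift computation are aimed at the right target: your $\psi(x)$ evaluated at $x=1+\tau$ equals $\mu r\binom{k}{r}(1+\tau)^{r-1}-\tau$, which is exactly $\left.\partial\varphi_{k,r}(\mu,\lambda,\tau)/\partial\lambda\right|_{\lambda=0}$ from~\eqref{eq:Ek}, so your threshold computation reproduces $\mu_{\mathrm{c}}(k,r)$, and your Step~4 union bound over small stable sets is essentially the proof of Lemma~\ref{lem:small}. The structural difference from the paper is in how supercriticality is converted into ``$n-o(n)$ vertices removed'': the paper never tracks the process into the bulk. It runs the chain only for $\Theta(\ell(n))$ steps, uses Chernoff bounds on an \emph{exactly computed} moment generating function (Theorems~\ref{thm:LG} and~\ref{thm:LG2}) to show both that the removable-edge count stays positive and that $C_{k-r+2}(\lfloor\tau\ell(n)\rfloor)$ exceeds the giant-component threshold $((k-r+2)(k-r+1))^{-1}n$, and then finishes with the giant-component argument of Lemma~\ref{lem:2large} together with the first-moment bound of Lemma~\ref{lem:linear} ruling out linear-size stopping sets; you instead propose to push the drift/concentration analysis through $s\le(1-\varepsilon)n$.

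The genuine gap is your Step~3, which is precisely where the paper's contribution lies and which you leave as a plan rather than a proof. Two concrete problems: (i) the stopping/conditioning issue --- your drift relations for $Q_s$ and $N_j(s)$ hold only while $Q_s>0$, and bounding $\Pr(\exists s:\,Q_s=0)$ by unconditional trajectory concentration requires first decoupling the dynamics from the stopping event; the paper's device is the dominated chain $\underline{E}_1^{k-r+1}$ (and the dominating chain for the converse direction), which evolves without the constraint $E_1^{k-r+1}(t)\ge 1$ and whose MGF is computed exactly by the backward recursion on the tilting parameters $\lambda_j^{(s)}$, so no martingale inequality is needed in the critical window at all. (ii) Your proposed Freedman/Azuma bound on $M_s(z)$, uniform in $z$ with Cauchy extraction of the $z^{r-2}$ coefficient, faces unbounded increments: a popped vertex lies in a number of hyperedges of expectation $km/n=\omega(\log n)$, and controlling these, as well as upgrading the heuristic ``the popped vertex is as likely to lie in a surviving hyperedge as a uniform present vertex'' to an exact transition law (the paper's $R_j(t)\sim\mathrm{Binom}(C_j(t),j/N(t))$), is the actual work; acknowledging it as ``the hardest part'' does not discharge it. The same concentration machinery would also be needed for your bulk claim that the drift stays positive up to $(1-\varepsilon)n$. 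As it stands, the proposal establishes the threshold heuristically and proves the $m(n)=\omega(n\log n)$ part, but the quantitative heart of Theorem~\ref{thm:main00} is missing; it could be closed either by the paper's route (dominated/dominating chains, evolution of the MGF, then Lemmas~\ref{lem:2large} and~\ref{lem:linear}) or by fully executing your martingale program, the former being substantially less delicate.
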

\begin{theorem}\label{thm:main11}
Let $m(n)=\mu \frac{n^{r-1}}{\ell(n)^{r-2}}$ for arbitrary constant $\mu < \mu_{\mathrm{c}}(k,r)$.
%Then, the $(k-r+2)$-peeling algorithm removes only $\Theta(\ell(n))$ vertices of $k$-uniform random hypergraph with $n$ vertices, $m(n)$ hyperedges  and
%$\ell(n)$ uniformly chosen rmoved vertices with high probability for $r\in\{2,3,\dotsc,k\}$.
Then, the $(k-r+2)$-peeling algorithm removes only $\Theta(\ell(n))$ vertices of $G_k(n,m(n),\ell(n))$
with high probability for $r\in\{3,\dotsc,k\}$.
\end{theorem}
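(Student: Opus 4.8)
The lower bound $\Omega(\ell(n))$ is the easy direction, and I would obtain it already from the first wave of the cascade. Write $S_0$ for the set of $\ell(n)$ initially removed vertices. The number of hyperedges having exactly $r-1$ vertices in $S_0$ is a sum of $m(n)$ independent indicators with mean $(1+o(1))\,\mu\binom{k}{r-1}\ell(n)=\Theta(\ell(n))$, hence concentrates; each such hyperedge is peeled immediately and removes its remaining $k-r+1\ge 1$ vertices, all outside $S_0$, and the expected number of pairs of such hyperedges sharing a vertex outside $S_0$ is $O(\ell(n)^2/n)=o(\ell(n))$, so a second-moment (Paley--Zygmund) estimate shows these newly removed vertices are almost all distinct. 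Thus $\Omega(\ell(n))$ vertices are removed with high probability.

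For the upper bound the plan is to track the peeling round by round and show it obeys a subcritical recursion. Set $R_0:=S_0$ and let $R_{j+1}$ be obtained from $R_j$ by adding all vertices of all hyperedges having at least $r-1$ vertices in $R_j$. Using $m(n)=\mu\,n^{r-1}/\ell(n)^{r-2}$, the regime $|R_j|=\Theta(\ell(n))=o(n)$, and the fact that distinct firing hyperedges share vertices outside $R_j$ only $o(|R_j|)$ often, one obtains to leading order (after telescoping) the scalar recursion
\[
c_{j+1}=1+A\,c_j^{r-1},\qquad c_0=1,\qquad A:=(k-r+1)\binom{k}{r-1}\mu,
\]
where $c_j:=|R_j|/\ell(n)$. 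An elementary analysis of this map shows that the iterates remain bounded — converging to the smallest root $c_\infty$ of $c=1+A\,c^{r-1}$, which for $A<A_{\mathrm{c}}:=(r-2)^{r-2}/(r-1)^{r-1}$ is a non-degenerate attracting fixed point and at $A=A_{\mathrm{c}}$ is the double root $c_\infty=(r-1)/(r-2)$ — while for $A>A_{\mathrm{c}}$ they diverge to $+\infty$. Since $A=(k-r+1)\binom{k}{r-1}\mu=r\binom{k}{r}\mu$, the condition $A<A_{\mathrm{c}}$ is exactly $\mu<\mu_\mathrm{c}(k,r)$. Hence, for $\mu<\mu_\mathrm{c}(k,r)$, the removed set stabilises at $|R_\infty|=(c_\infty+o(1))\,\ell(n)=\Theta(\ell(n))$, which together with the lower bound gives the theorem.

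To make this rigorous, the plan is an induction over rounds: the number of hyperedges firing in a round is a sum of Bernoulli variables of total mean $\Theta(\ell(n))$, so a Chernoff bound pins the increment to within $o(\ell(n))$ of its deterministic prediction; because $c\mapsto 1+A\,c^{r-1}$ is a contraction near $c_\infty$, this keeps $|R_j|\le(c_\infty+\varepsilon)\ell(n)$ for every $j$, and a union bound over the $O(\ell(n))$ rounds fails with probability $o(1)$ using only $\ell(n)=\omega(1)$ (so that $\ell(n)/\log\ell(n)\to\infty$). The point I expect to be the main obstacle is the exposure bookkeeping: one must reveal the hypergraph so that the randomness used to count round $j{+}1$'s firings is genuinely fresh given $R_j$, which is delicate here because every vertex lies in $\omega(1)$ hyperedges and a single removal can push several hyperedges over the threshold simultaneously, so the per-round increments are not literally independent sums and must be coupled to such sums by deferred decisions. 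I would also flag that a naive first-moment / union bound over ``peelable witnesses'' does \emph{not} work in this regime — its expectation blows up, essentially because, given an attachment set of size $\Theta(\ell(n))$, there are $\Theta(\binom{k}{r-1}\mu\,\ell(n))=\omega(1)$ admissible choices for the next firing hyperedge, so the union bound accumulates a runaway factor — which is exactly why the growth-versus-depletion competition must be tracked directly; in particular Theorem~\ref{thm:main11} should also follow from this paper's Markov-chain / moment-generating-function machinery run at $\mu<\mu_\mathrm{c}(k,r)$, where that chain is absorbed after $\Theta(\ell(n))$ removals.
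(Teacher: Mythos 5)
Your route is genuinely different from the paper's. The paper never looks at peeling ``rounds'': it sequentializes the algorithm so that exactly one vertex is removed per step, tracks the vector of hyperedge-degree counts as a Markov chain, dominates it by the chain~\eqref{eq:MarkovU} in which the number $\overline{E}_1^{k-r+1}(t)$ of available edges loses $1$ per step and gains $(k-r+1)$ times a binomial, and computes the moment generating function of that chain (Theorem~\ref{thm:UG}, via the recursion of Section~\ref{sec:gen}); a Chernoff bound then shows that for $\mu<\mu_\mathrm{c}(k,r)$ the chain is absorbed before $\tau\ell(n)$ steps with probability $1-\exp\{\varphi_{k,r}(\mu,\lambda,\tau)\ell(n)+O(\max\{1,\ell(n)^2/n\})\}$, which is Theorem~\ref{thm:ML} and yields the upper bound of Theorem~\ref{thm:main11} together with an explicit exponential rate. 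Your fixed-point bookkeeping is consistent with this: with $A=(k-r+1)\binom{k}{r-1}\mu=r\binom{k}{r}\mu$, the smallest root of $c=1+Ac^{r-1}$ is exactly $1+\tau^*$ with $\tau^*$ as in Theorem~\ref{thm:ML}, and $A<(r-2)^{r-2}/(r-1)^{r-1}$ is exactly $\mu<\mu_\mathrm{c}(k,r)$, so your recursion recovers both the threshold and the limiting removed fraction. Your lower-bound paragraph ($\Omega(\ell(n))$ removed, via the hyperedges with exactly $r-1$ vertices in $S_0$ and a second-moment control of overlaps) is correct, and is a direction the paper does not spell out.

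The gap is the one you flag yourself, and as written it is not closed. ``The number of hyperedges firing in a round is a sum of Bernoulli variables, so a Chernoff bound pins the increment'' is not available directly: $R_j$ is a function of the very hyperedges being counted, so conditionally on $R_j$ the hyperedges are not i.i.d.\ uniform. The repair is standard but must be carried out: given the full history through round $j$, the unfired hyperedges are i.i.d.\ uniform on $k$-sets having at most $r-2$ vertices in $R_{j-1}$ (nestedness of the $R_i$ is what keeps the conditioning this simple), so the new firings in round $j+1$ are stochastically dominated by a binomial whose success probability is $(1+o(1))\binom{k}{r-1}\bigl(|R_j|^{r-1}-|R_{j-1}|^{r-1}\bigr)/n^{r-1}$; these probabilities telescope, and the circularity is removed by a cap/stopping-time device (stop at the first round with $|R_j|>(c_\infty+\delta)\ell(n)$ and use that the map $g(c)=1+Ac^{r-1}$ satisfies $g(c_\infty+\delta)<c_\infty+\delta$ in the subcritical case). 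Separately, the ``union bound over $O(\ell(n))$ rounds, each increment within $o(\ell(n))$ of its prediction'' scheme needs more care: per-round slacks of size $o(\ell(n))$ summed over $\Theta(\ell(n))$ rounds are useless unless the contraction is used to make the error budget geometric, and in late rounds the conditional means are $o(1)$, so concentration around the prediction must be replaced by an upper-tail (Poisson-type) bound for exceeding a fixed fraction of $\ell(n)$. With these repairs your argument gives Theorem~\ref{thm:main11} with probability $1-o(1)$; the paper's moment-generating-function route gives in addition the large-deviation rate $\exp\{\varphi_{k,r}\,\ell(n)\}$ quantified in Theorem~\ref{thm:ML}.
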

The above results show that $\mu_{\mathrm{c}}(k,r)$ is the sharp threshold constant for
 %the behavior of the peeling algorithm starting from $\ell(n)\in\omega(1)\cap o(n)$ removed nodes.
 the behavior of the peeling algorithm.
Furthermore, upper bounds of the rate of the large deviation and the number of removed vertices below the threshold are also obtained in this paper.

%using $(r-1)$-wise independent local functions against the BasicLP using $\ell(n)$ leaked variables.
%When $\ell(n)=o(\sqrt{n})$, the order $\frac{n^{r-1}}{\ell(n)^{r-2}}$ of $m(n)$ in the above results is larger than $n^{r/2}$,
%which means that the BasicLP is not optimal.
%%By tightening the polytope, one can obtain the currently optimal order $n^{\frac{r}{2}}$.
%%This problem will be discussed in Section~\ref{sec:condis}.
%When $\ell(n)=\omega(\sqrt{n})$, Goldreich's generator is no longer secure even when $m(n)=o(n^{r/2})$ since $\frac{n^{r-1}}{\ell(n)^{r-2}}=o(n^{r/2})$.
%In this case, in our knowledge, there is no known polynomial-time algorithm inverting Goldreich's generator 
%asymptotically better than the BasicLP.

%In this paper, the existence of fractional optimal solutions in the BasicLP is characterized by a combinatorial structure called a stopping set
%which is analyzed by tracking behavior of a simple message-passing algorithm called the peeling algorithm.

%This paper is organized as follows.

%In previous works on the peeling algorithm on random hypergraph with linearly many hyperedges,
 %Wormald's method of differential equation has been used~\cite{Luby:1997:PLC:258533.258573}.
%However, Wormald's method is not available for our problem due to the superlinearly many hyperedges.
%Hence, in this work, a new method called the evolution of the moment generating function is proposed, which gives a precise analysis of
%the dynamics of the peeling algorithm.
%%From the analysis of a generating function, we also obtain a bound of concentration ratio to the typical cases.

\if0
Constraint satisfaction problem (CSP) is one of the most popular $\mathsf{NP}$ problem. 
The class $\mathsf{NP}$ is a class of languages with polynomial-size certificate which can be verified in polynomial time.
Although a satisfying assignment of CSP for $n$ variables is a certificate of satisfiability of length $n$,
it is not clear if there is a certificate of length $o(n)$ which can be verified in polynomial time.
The conjecture $\mathsf{P}\ne \mathsf{NP}$ means that
 a class of languages which can be determined in polynomial time without certificate is not equal to
a class of languages which can be verified in polynomial time with polynomial-size certificate.
In order to classify lanuages between $\mathsf{P}$ and $\mathsf{NP}$,
the concept of limited nondeterminism was suggested by Kintala and Fischer~\cite{}.
(survey on limited nondeterminism).

On the other hand, randomly generated constraint satisfaction problem has been considered both in
theoretical computer science and statistical physics~\cite{}.
It is believed (but not proved rigorously) that there is a threshold constant $\alpha_{\mathrm{s}}(k)$ such that
randomly generated $k$-CNF including $n$ variables and $m(n)=\alpha n$ clauses
is satisfiable with high probability if $\alpha<\alpha_{\mathrm{s}}(k)$,
and not satisfiable with high probability if $\alpha > \alpha_{\mathrm{s}}(k)$ as $n\to\infty$.
However, for $\alpha$ smaller than $\alpha_{\mathrm{s}}(k)$ but close to $\alpha_{\mathrm{s}}(k)$,
no polynomial-time algorithm determining the satisfiability without a certificate is known.
It would be interesting to analyze typical (or average) certificate length for particular verification algorithm
 for the random $k$-CNF for $\alpha < \alpha_{\mathrm{s}}(k)$.

%However, for $\alpha$ smaller than $\alpha_{\mathrm{s}}(k)$ but close to $\alpha_{\mathrm{s}}(k)$,
%no polynomial-time algorithm finding a satisfying assighment is known.
%In this region of parameters, analysis of required certificate length for some particular polynomial-time verification algorithm
%would be interesting problem.

In this paper, we deal with randomly generated $k$-CSP including polynomially many constraints, i.e., $m(n)=n^{1+\epsilon}$.
Of course, in this region of parameters, the random $k$-CSP is unsatisfiable with high probability since expected number of
satisfying assignments goes to 0 as $n\to\infty$.
Hence, we deal with a random ensemble of $k$-CSP \textit{with a planted solution}, for which 
generated $k$-CSPs are guaranteed to be satisfiable.
In the following, $r$ denotes a parameter determined by the local constraint meaning that the local constraint is
$(r-1)$-wise independent but not $r$-wise independent\footnote{The $(r-1)$-wise independence will be defined in}.
For the random planted $k$-CSP, Feige and Ofek implicitly showed that if $m(n)=\Omega(n^{r/2}\log n)$
there is a polynomial-time algorithm which finds a satisfying assignment with high probability without nondeterministic step~\cite{}.
On the other hand, for showing difficulty of the problem for $m(n)=n^{r/2-\delta}$,
the concept of \textit{boundary expanding} plays a important role~\cite{}.
O'Donnell and Witmer showed that for $m(n)=n^{r/2-\delta}$, 
 randomly generated $k$-uniform hypergraph is $(n^{\delta/(r-2)}, k-r+\epsilon)$-boundary expander with high probability
for any $\epsilon \in(0,1)$
if $o(n)$ number of hyperedges are removed.
For $k$-CSPs represented by $(d, k-r+\epsilon)$-boundary expander hypergraph for some $d=\omega(1)$, some polynomial-time algorithm cannot 
find a satisfying assignment, e.g., myopic backtracking algorithm~\cite{}, Sherali-Adams$^+$ semipositive definite programming (SDP)~\cite{}.
%Cook et al.\ showed that any myopic backtracking algorithm needs superpolynomial number of steps~\cite{bennebas, odonnell2014goldreich}.
%Bennebas et al.\ and O'Donnell and Witmer showed that for $m(n)=n^{r/2-\delta}$, $O(n^{\delta/(r-2)})$-round Sherali-Adams$^+$
%relaxation cannot find a satisfying assignment with high probability if $o(n)$ constraints which violate boundary expanding property are removed.
Hence, it seems that the degree $r/2$ of $n$ in the number of constraints is the tight threshold for polynomial-time algorithm without certificate.
Another evidence for the critical number of constraints $m(n)=O(n^{r/2})$ was shown by Feldman et al.~\cite{}.

For $m(n)=n^{r/2-\delta}$, by extending Feige and Ofek's idea, 
one can find satisfying assignment of the random planted $k$-CSP with high probability in polynomial time
using $O(n^{1/2+\delta/(r-2)}\log n)$ nondeterministic steps.
In this paper, we derive the exact constant factor of the certificate length
 for message-passing type verification algorithms which works in polynomial time.
More precicely, we analyze a class of message-passing algorithm including belief propagation (BP),
which is well-known algorithm in machine learning, statistical physics and information theory.

Length of certificate of \textit{unsatisfiability} for CSPs has also been investigated.
This problem is motivated by the ``$\mathsf{NP}$ versus $\mathsf{coNP}$'' problem.
The conjecture $\mathsf{NP}\ne\mathsf{coNP}$ means that 3-CNF formula generally does not have
a polynomial-size certificate of unsatisfiability.
Beame et al.~\cite{} and Ben-Sasson and Wigderson~\cite{} showed that if $m(n)=n^{3/2 - \epsilon}$ for $\epsilon\in(0,1/2)$,
the random 3-CNF formula does not have polynomial-size certificate of unsatisfiability for resolution refutations.
On the other hand, Feige et al.\ showed that for $m(n)>n^{7/5}$, the random 3-CNF formula has a polynomial-size certificate of unsatisfiability
with high probability,
which implies that the resolution refutations are weaker than general refutation algorithms~\cite{}.
\fi

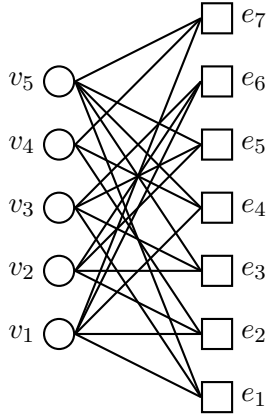
\begin{figure}
\begin{center}
\begin{tikzpicture}
[yshift=20pt, rotate=90,scale=0.42, inner sep=0mm, C/.style={minimum size=4mm,circle,draw=black,thick},
S/.style={minimum size=4mm,rectangle,draw=black,thick}, label distance=1mm]
\node (1) at (0,5.0) [C,label=left:$v_1$] {};
\node (2) at (2,5.0) [C,label=left:$v_2$] {};
\node (3) at (4,5.0) [C,label=left:$v_3$] {};
\node (4) at (6,5.0) [C,label=left:$v_4$] {};
\node (5) at (8,5.0) [C,label=left:$v_5$] {};
\node (a1) at (-2,0) [S,label=right:${e_1}$] {};
\node (a2) at (0,0) [S,label=right:${e_2}$] {};
\node (a3) at (2,0) [S,label=right:${e_3}$] {};
\node (a4) at (4,0) [S,label=right:${e_4}$] {};
\node (a5) at (6,0) [S,label=right:${e_5}$] {};
\node (a6) at (8,0) [S,label=right:${e_6}$] {};
\node (a7) at (10,0) [S,label=right:${e_7}$] {};
\draw (a1.west) to (1.east) [thick];
\draw (a1.west) to (3.east) [thick];
\draw (a1.west) to (5.east) [thick];
\draw (a2.west) to (1.east) [thick];
\draw (a2.west) to (2.east) [thick];
\draw (a2.west) to (4.east) [thick];
\draw (a3.west) to (2.east) [thick];
\draw (a3.west) to (3.east) [thick];
\draw (a3.west) to (5.east) [thick];
\draw (a4.west) to (1.east) [thick];
\draw (a4.west) to (4.east) [thick];
\draw (a4.west) to (5.east) [thick];
\draw (a5.west) to (2.east) [thick];
\draw (a5.west) to (3.east) [thick];
\draw (a5.west) to (5.east) [thick];
\draw (a6.west) to (1.east) [thick];
\draw (a6.west) to (2.east) [thick];
\draw (a6.west) to (3.east) [thick];
\draw (a7.west) to (1.east) [thick];
\draw (a7.west) to (4.east) [thick];
\draw (a7.west) to (5.east) [thick];
\end{tikzpicture}
\end{center}
\caption{A bipartite graph representation of a 3-uniform hypergraph.}
\label{fig:hypergraph}
\end{figure}

\if0
\section{Randomly generated Goldreich's generator and known results}
%\section{Randomly generated hypergraphs and known results}
%For analyzing Goldreich's generator, randomly generated bipartite graph has been considered since randomly generated bipartite graph
%is likely to be expander.
Randomly generated Goldreich's generator has been analyzed 
since randomly generated bipartite graph is likely to be expander~\cite{odonnell2014goldreich}.
The random ensemble of Goldreich's generator is defined as follows.
The alphabet size $q$, the input length $n$ of the whole generator, the number $m(n)$ of local functions,
 the input length $k$ of local functions and the output length $h$ of local functions are fixed.
A local function $f_a\colon [q]^k\to [q]^{h}$ is also fixed for every $a\in[m]$.
The above parameters and local functions are given and not randomly generated.
Then,
a $k$-tuple $(i^{(a)}_{1},\dotsc, i^{(a)}_{k})$ of distinct indices of variables is uniformly chosen from $[n]$
for each $a\in[m]$ independently.
Then, the generator is defined as $(f_1(x_{i^{(1)}_1},\dotsc, x_{i^{(1)}_k}),\dotsc, f_m(x_{i^{(m)}_1},\dotsc, x_{i^{(m)}_k}))$
as in Section~\ref{subsec:goldreich}.
Hence, in the random ensemble, a Goldreich's generator is uniformly chosen from all of the $[n(n-1)\dotsm(n-k+1)]^m$ possible choices.

As mentioned in the previous section, the concept of $(r-1)$-wise independence characterizes security of
randomly generated Goldreich's generator as OWF and PRG for many algorithms.
The following results for $r\ge 3$ are known for randomly generated Goldreich's generator on the binary alphabet for $h=1$.

\begin{lemma}[\cite{mossel2006varepsilon}, \cite{v003a002}, \cite{odonnell2014goldreich}]
Assume that all local functions are not $r$-wise independent for some $r\ge 3$.
If $m(n)=Cn^{r/2}\log n$ for sufficiently large constant $C$, then there is a polynomial-time algorithm inverting
the randomly generated Goldreich's generator with high probability.
%If $m(n)=Cn^{r/2}$ for sufficiently large constant $C$, then there is a polynomial-time algorithm distinguishing
%the output of the randomly generated Goldreich's generator from uniform random variables with high probability.
\end{lemma}

\begin{lemma}[\cite{odonnell2014goldreich}, \cite{v008a012}]
Assume that all local functions are $(r-1)$-wise independent for some $r\ge 3$.
If $m(n)=O(n^{r/2-\delta})$ for some $\delta>0$,
 then a Sherali-Adams$^+$ semidefinite programming relaxation 
has a useless solution with high probability
for the randomly generated Goldreich's generator in which $o(n)$ number of outputs are removed.
\end{lemma}
The above results mean that for any $\delta>0$, randomly generated Goldreich's generator is insecure when $m(n)=O(n^{r/2+\delta})$
and secure against the SDP when $m(n)=O(n^{r/2-\delta})$, and when $o(n)$ number of output bits are removed from the generator.

\if0
Hence, a local function with large $r$ seems to be preferable.
However, for the binary alphabet, $r=k$ holds only when the predicate is affine.
In that case, one can efficiently find the planted assignment by solving the system of linear equations.
Generally, the degree $d$ of polynomial representation on $\mathbb{F}_2$ of $(r-1)$-wise independent predicate is at most $k-r$ if $r\le k-2$~\cite{odonnell2014goldreich}.
If the predicates in the generator have the degree $d$ in the polynomial representation, there is a linear attack when $m(n)=O(n^d)$~\cite{mossel2006varepsilon}.
Hence, for the binary alphabet, large $r$ implies small $d$ which means that there exists another attacker.
However, for non-binary cases, $r=k$ does not immediately imply existence of another attacker.
While $r=k$ implies that the constraint given by a local function is uniquely extendible constraint, it is known that combination of three types of
uniquely extendible constraints on quaternary alphabet can represent the three coloring problem on a graph~\cite{connamacher2004exact}.
This observation gives a motivation for considering non-binary generalization of Goldreich's generator.
\fi

We also consider randomly generated planted $k$-CSP for introducing linear programming relaxation in the next section.
A Goldreich's generator $g$ is randomly generated in the above way.
The input values $\bm{x}^*\in[q]^n$ is uniformly chosen from $[q]^n$.
Then, the output $\bm{y}\in[q]^{hm}$ of $g$ for the input $\bm{x}^*$ is computed.
A pair of Goldreich's generator $g$ and the output $\bm{y}\in[q]^{hm}$ is an instance of randomly generated planted $k$-CSP.
Here, we regard $g(\bm{x})=\bm{y}$ as a $k$-CSP for variables $\bm{x}\in[q]^n$, which always has the solution $\bm{x}^*$.
The input values $\bm{x}^*$ used for generating the $k$-CSP
is called a planted solution.
When we assume that $\ell(n)$ number of input variables for Goldreich's generator are leaked,
the randomly generated $k$-CSP includes restrictions $x_i=x_i^*$ for $i=1,2,\dotsc,\ell(n)$.
%the $k$-CSP is generated in the same way as the above, but with restrictions $x_i=x_i^*$ for $i=1,2,\dotsc,\ell(n)$.

\if0
In this paper, we consider leakage resilience of Goldreich's generator.
Assume that $u(n)\in \omega(1)\cap o(n)$ number of input variables chosen uniformly from the $\binom{n}{u(n)}$ choices
are leaked to the attacker.
The followings are the main results of this paper
on the BasicLP attack for Goldreich's generator
 with $u(n)\in \omega(1)\cap o(n)$ leaked input variables which will be defined in the next section.

\begin{theorem}[(Informal)]\label{thm:main0}
Assume that all local functions are $(r-1)$-wise independent for some $r\ge 3$.
If $m(n)=\mu \frac{n^{r-1}}{u(n)^{r-2}}$ for arbitrary constant $\mu < \mu_{\mathrm{c}}(k,r)$,
and if $u(n)$ input variables are leaked,
 then the BasicLP attack cannot distinguish the output of Goldreich's generator from some random variables close to $U_m$
with high probability.
%using $u(n)$ known input variables cannot
% fix $\frac1{r-2} u(n)$ input variables of the randomly generated Goldreich's generator with probability exponentially close to 1 with respect to $u(n)$.
\end{theorem}

\begin{theorem}\label{thm:main1}
Assume that all local functions have MDS inverse image of the dimension $r-1$ for some $r\ge 3$.
If $m(n)=\mu \frac{n^{r-1}}{u(n)^{r-2}}$ for any constant $\mu > \mu_{\mathrm{c}}(k,r)$,
and if $u(n)$ input variables are leaked,
 then the BasicLP attack has unique optimal solution with high probability.
 %linearly many number of input variables of the randomly generated Goldreich's generator with probability exponentially close to 1 with respect to $u(n)$.
\end{theorem}
\fi

\if0
In the above results, $u(n)\in\omega(1)\cap o(n)$ variables are assumed to be known.
This assumption is justified when $u(n)=O(\log n)$ since one can
try to apply the LP for all of the $q^{u(n)}$ assignments in polynomial time in $n$.
If linearly many variables are fixed without contradiction and if $m(n)$ is superlinear,
 one can distinguish the output of Goldreich's generator from uniform random variable
since with high probability, linearly many variables cannot be fixed without contradiction when uniform random variables are assigned to
the output variables of local functions.
When $u(n)=O(\log n)$, from Theorem~\ref{thm:main0}, the LP relaxation attack fails with probability polynomially close to 1 with respect to $n$.
Hence, it is not sufficient to claim that the generator is secure in the strong sense against the LP relaxation attack~\cite{books/cu/Goldreich2001}.
\fi

\if0
For attacking PRG generated by Goldreich's generator, we can consider the following strategy.
Let $L_g:=\{\bm{y}\in[q]^{lm}\mid \exists \bm{x}\in[q]^n, g(\bm{x})=\bm{y}\}$.
Then, an attacker $h_g$ is defined as
\begin{align*}
h_g(\bm{y})= 1 &\iff \bm{y}\in L_g\\
h_g(\bm{y})= 0 &\iff \bm{y}\notin L_g.
\end{align*}
If the input for $h_g$ is generated by $g$, it always returns 1, i.e., $\Pr(h_g(g(X))=1)=1$ where $X$ is a uniform random variable on $[q]^n$.
If the input for $h_g$ is uniform random variable on $[q]^{lm}$,
the probability that $h_g$ returns 1 is $\Pr(h_g(U)=1)=|L_g|/q^m\le q^{n-m}$, which is small if $m(n)$ is much larger than $n$ where $U$ is a uniform random variable on $[q]^m$.
From the above observation, Goldreich's generator is not secure against the attack $h_g$, which is of course not necessarily efficiently computable.
In this paper, we try to find a certificate $\bm{x}\in g^{-1}(\bm{y})$ for $\bm{y}\in L_g$ for implementing $h_g$.
There is also another attack called a ``correlation attack'', which tries to find a certificate for $\bm{y}\notin L_g$.
Generally it is difficult to show that $L_g$ does not have a polynomial-time algorithm since $L_g$
has a short certificate $\bm{x}\in[q]^n$ and has a verifier in $\mathsf{NC}^0$.
\fi

\if0
\section{Main results}
In this section, the main results of this paper are shown while rigorous definitions of problems and algorithms are
in the following sections.
In this paper, certificate given for verifiers is assumed to be a part of the planted solution, i.e.,
for $u(n)=o(n)\cap \omega(1)$ number of variables, assignments in the planted solution is given for verifier.  
Then, verifier tries to extend the partial assignment for $u(n)$ variables to the assignment for all of $n$ variables.
It is assumed that the $u(n)$ variables are chosen independenly of the realization of $k$-CSP.
Since  $k$-CSP is randomly generated, without loss of generality, we can assume that the assignment for the first $u(n)$ variables 
are given for verifier.
Let
$\mu_\mathrm{c}(k,r):=\binom{k}{r}^{-1}\frac{(r-2)^{r-2}}{r(r-1)^{r-1}}$.

\begin{theorem}\label{thm:main0}
Assume that all local constraints in random $k$-CSP  ensemble are $(r-1)$-wise independent for some $r\ge 3$.
If $m(n)=\mu \frac{n^{r-1}}{u(n)^{r-2}}$ for arbitrary constant $\mu < \mu_{\mathrm{c}}(k,r)$,
 then any message-passing algorithm using $u(n)\in o(n)\cap\omega(1)$ known variables in the planted solution cannot
 fix $\frac1{r-2} u(n)$ variables with probability exponentially close to 1 with respect to $u(n)$.
\end{theorem}

\begin{theorem}\label{thm:main1}
Assume that all local constraints in random $k$-CSP ensemble has support being MDS code of the dimension $r-1$ for some $r\ge 3$.
If $m(n)=\mu \frac{n^{r-1}}{u(n)^{r-2}}$ for any constant $\mu > \mu_{\mathrm{c}}(k,r)$,
 then a simple message-passing algorithm using $u(n)\in o(n)\cap\omega(1)$ known variables in the planted solution can fix
 linearly many number of variables with probability exponentially close to 1 with respect to $u(n)$.
\end{theorem}

Note that a constraint with support being MDS code of the dimension $r-1$ is $(r-1)$-wise independent constraint
(but not $r$-wise independent).
Hence, Theorem~\ref{thm:main0} and \ref{thm:main1} show that $\mu_{\mathrm{c}}(k,r)$ is the tight threshold
in the relationship between $m(n)$ and $u(n)$ for given $k$ and $r$.
From Theorem~\ref{thm:main0} and \ref{thm:main1}, the following corollaries are obtained.

\begin{corollary}\label{cor:main0}
Assume that all local constraints in random $k$-CSP  ensemble are $(r-1)$-wise independent for some $r\ge 3$.
When $m(n)=n^{r/2-\delta}$
and $u(n) = \mu^{1/(r-2)} n^{1/2+\delta/(r-2)}$ for $\mu<\mu_{\mathrm{c}}(k,r)$,
any message-passing algorithm cannot find a satisfying assignment with high probability.
\end{corollary}

\begin{corollary}\label{cor:main1}
Assume that all local constraints in random $k$-CSP ensemble has support being MDS code of the dimension $r-1$ for some $r\ge 3$.
When $m(n)=n^{r/2-\delta}$
and $u(n) = \mu^{1/(r-2)} n^{1/2+\delta/(r-2)}$ for $\mu>\mu_{\mathrm{c}}(k,r)$,
any a simple message-passing algorithm find a satisfying assignment with high probability.
\end{corollary}

Corollary~\ref{cor:main0} means that
a certificate of length
$\mu^{1/(r-2)} n^{1/2+\delta/(r-2)}$ for $\mu<\mu_{\mathrm{c}}(k,r)$ is needed at least for this verification system.
Corollary~\ref{cor:main1} means that
a certificate of length
$\mu^{1/(r-2)} n^{1/2+\delta/(r-2)}$ for $\mu>\mu_{\mathrm{c}}(k,r)$ is sufficient for polynomial-time verification algorithms.
\fi
\fi

%\section{Peeling algorithm, stopping set and message-passing/linear-programming equivalence}
\section{Bipartite graph representation of hypergraphs, peeling algorithm and stopping sets}
%It is known that for $k$-CSP including zero-optimal solutions, the BasicLP~\eqref{eq:LP}
%is strongly related to a simple message-passing algorithm.
In this work, a hypergraph is represented by a bipartite graph.
The bipartite graph representation consists of two types of nodes ``vertex nodes'' and ``hyperedge nodes''
each of which corresponds to a vertex and a hyperedge in the hypergraph, respectively.
A vertex node $v$ and a hyperedge node $e$ are connected by an edge in the bipartite graph representation if and only if
 the vertex corresponding to $v$ is a member of the hyperedge corresponding to $e$ in the hypergraph.
The example of the bipartite graph representation is shown in Fig.~\ref{fig:hypergraph}.
%In this paper, the number of vertices in a hypergraph is denoted by $n$.
The set of vertex nodes and the set of hyperedge nodes are denoted by $V$ and $E$, respectively.
The neighborhoods of vertex node $v\in V$ and the neighborhoods of hyperedge node $e\in E$ are denoted by $\partial v\subseteq E$ and $\partial e\subseteq V$, respectively.
\begin{definition}[Peeling algorithm for a bipartite graph]\label{def:peeling}
%For $d\in\{2,3,\dotsc,k\}$, the $d$-peeling algorithm for a bipartite graph starting from a set $L\subseteq[n]$ of vertices is defined as follows.
%First, all vertex nodes not in $L$ are removed from the bipartite graph.
%Then, if there is a hyperedge node of the degree at most $d-1$,
%then the hyperedge node and all of the at most $d-1$ vertex nodes connected to the hyperedge node 
%are removed from the bipartite graph.
%This process is iterated until there is no hyperedge node of the degree at most $d-1$.
For $d\in\{2,3,\dotsc,k\}$, the $d$-peeling algorithm for a bipartite graph is defined as follows.
If there is a hyperedge node $e\in E$ of degree at most $d-1$,
then the hyperedge node $e$ and all of the at most $d-1$ vertex nodes connected to the hyperedge node $e$
are removed from the bipartite graph.
This process is iterated until there is no hyperedge node of degree at most $d-1$.
\end{definition}
Note that on similar settings, the peeling algorithm was analyzed for $k=3$ and $d=2$ in~\cite{coja2012propagation}, \cite{watanabe2013mp}.
%The $d$-peeling algorithm naturally appears in many problems,
% e.g., the decoding of low-density parity-check codes~\cite{Luby:1997:PLC:258533.258573},
% the satisfiability and clustering phase transition of random $k$-XORSAT~\cite{ibrahimi2011set}, cuckoo hashing~\cite{dietzfelbinger2010tight}
%and invertible Bloom lookup table~\cite{mitzenmacher2013simple}.
%%The peeling algorithm starting from $L\subseteq[n]$ stops if and only if the current set of variables forms a structure called a stopping set.
The peeling algorithm stops if and only if the current set of variables forms a structure called a stopping set.
\begin{definition}[Stopping set~{\cite{di2002finite}}]
For $d\ge 2$, a subset $S\subseteq V$ is called a $d$-stopping set if $|\partial e\cap S|\in \{0,d,d+1,\dotsc,k\}$ for all hyperedges $e\in
E$.
\end{definition}
It is obvious that the $d$-peeling algorithm terminates at the largest $d$-stopping set.
%It is obvious that the $d$-peeling algorithm starting from $L\subseteq[n]$
%terminates at the largest $d$-stopping set in $[n]\setminus L$.
 %removes all variables if and only if there does not exist non-empty $d$-stopping set in $L$.
%The $d$-peeling algorithm can be regarded as an algorithm finding the largest $d$-stopping set in $L$.
%The concept of $d$-stopping set is useful for analyzing the BasicLP~\eqref{eq:LP}.
Hence, it is sufficient to analyze the existence of non-empty $d$-stopping sets for analyzing the $d$-peeling algorithm.
%In this paper, we deal with $k$-uniform hypergraphs with $n$ vertices, $m(n)$ hyperedges and $\ell(n)$ removed vertices.
%
We classify non-empty $d$-stopping sets to three classes according to their size; 
%constant $\alpha\in(0,1/2)$; 
%For a fixed constant $\alpha\in(0,1/2)$,
$\alpha$-small $d$-stopping sets whose size is at least 1 and at most $\lceil \alpha n\rceil$,
$\alpha$-linear $d$-stopping sets whose size is at least $\lceil \alpha n\rceil+1$ and at most $\lfloor(1-\alpha)n\rfloor$
and $\alpha$-large $d$-stopping set whose size is at least $\lfloor (1-\alpha)n\rfloor +1$ for some fixed $\alpha\in(0,1/2)$.
%A non-empty $d$-stopping set of size smaller than $\alpha n$ is called $\alpha$-small $d$-stopping set for $\alpha\in(0,1)$.
%A non-empty $d$-stopping set which is not $\alpha$-small $d$-stopping set is called $\alpha$-large $d$-stopping set.
%

\section{Analysis of stopping sets}
In this section, we show the results of analysis of existence of stopping sets which reveals the behavior of the peeling algorithm.
As shown in Appendix~\ref{apdx:linear}, by the standard analysis using Markov's inequality and expected number of stopping sets,
 it is easy to show that there is no $\alpha$-linear $d$-stopping set if the number $m(n)$ of hyperedges is superlinear.
\begin{lemma}[Linear-size stopping sets]\label{lem:linear}
For any $d\in\{2,3,\dotsc,k\}$ and $\alpha\in(0,1/2)$, there exists $\beta >0$ such that
%the randomly generated $k$-uniform hypergraph with $m(n)=\beta n$ hypereges does not have $\alpha$-linear $d$-stopping set
$G_k(n,\beta n, 0)$ does not have $\alpha$-linear $d$-stopping set
with probability $1-\exp\{O(n)\}$.
\end{lemma}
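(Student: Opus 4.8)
The plan is to run the first-moment method exactly as advertised: bound the expected number of $\alpha$-linear $d$-stopping sets and invoke Markov's inequality. Fix $d\in\{2,\dots,k\}$ and $\alpha\in(0,1/2)$, let $\beta>0$ be a constant to be pinned down at the end, and let $X$ denote the number of vertex subsets $S$ with $\lceil\alpha n\rceil+1\le|S|\le\lfloor(1-\alpha)n\rfloor$ that are $d$-stopping sets in $G_k(n,\beta n,0)$. Since the $\beta n$ hyperedges are drawn independently and uniformly among the $\binom nk$ size-$k$ subsets, for a \emph{fixed} set $S$ of size $s$ a single random hyperedge $e$ violates the stopping-set condition (i.e.\ $|\partial e\cap S|\in\{1,\dots,d-1\}$) with probability
\[
p_s:=\sum_{j=1}^{d-1}\frac{\binom{s}{j}\binom{n-s}{k-j}}{\binom{n}{k}},
\]
so $S$ is a $d$-stopping set with probability $(1-p_s)^{\beta n}$. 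Taking the union bound over the $\binom ns\le 2^n$ choices of $S$ of each size $s$ gives
\[
\mathbb E[X]\;\le\;\sum_{s=\lceil\alpha n\rceil+1}^{\lfloor(1-\alpha)n\rfloor}\binom ns(1-p_s)^{\beta n}\;\le\;n\,2^{n}\max_{\alpha n\le s\le(1-\alpha)n}(1-p_s)^{\beta n}.
\]

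The single quantitative input needed is a uniform lower bound $p_s\ge c_0>0$, with $c_0=c_0(k,d,\alpha)$ independent of $n$, valid throughout the linear window. This is elementary: because $d\ge2$, the sum defining $p_s$ contains the $j=1$ term, which equals $k\cdot\frac sn\bigl(\frac{n-s}{n}\bigr)^{k-1}(1+o(1))$ with the $o(1)$ uniform in $s$ (here $k$ is fixed and both $s$ and $n-s$ are at least $\alpha n$), and on $\alpha\le s/n\le 1-\alpha$ the product $x(1-x)^{k-1}$ is bounded below by the positive constant $\min\{\alpha(1-\alpha)^{k-1},(1-\alpha)\alpha^{k-1}\}$. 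Hence for all large $n$ one may take, say, $c_0=\tfrac12 k\min\{\alpha(1-\alpha)^{k-1},(1-\alpha)\alpha^{k-1}\}$, and consequently
\[
\mathbb E[X]\;\le\;n\,2^{n}(1-c_0)^{\beta n}\;=\;n\exp\bigl\{n(\ln 2+\beta\ln(1-c_0))\bigr\}.
\]

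Finally, choosing any constant $\beta>\ln 2/\bigl(-\ln(1-c_0)\bigr)$ makes the exponent strictly negative, so $\mathbb E[X]=e^{-\Omega(n)}$, and Markov's inequality yields $\Pr[X\ge1]\le\mathbb E[X]=e^{-\Omega(n)}$, which is the assertion of the lemma. Note that since $(1-p_s)^{m}$ is nonincreasing in $m$, the bound on $\mathbb E[X]$ only improves for larger $m(n)$, so a sufficiently large linear $m(n)=\beta n$ already suffices and the superlinear regime of interest in the rest of the paper is covered a fortiori. I do not anticipate a genuine obstacle; the only point demanding care is that the constant $c_0$ and the $o(1)$ estimate be uniform over the whole range $\alpha n<s\le(1-\alpha)n$, which holds because $k,d$ are fixed and $s$, $n-s$ are both $\Theta(n)$ there.
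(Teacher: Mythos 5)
Your proposal is correct and follows essentially the same route as the paper: a first-moment bound on the number of $\alpha$-linear $d$-stopping sets via Markov's inequality, with the per-hyperedge violation probability lower-bounded through the $j=1$ term (which is exactly the paper's reduction to the $d=2$ expression $1-\tfrac{l\binom{n-l}{k-1}}{\binom{n}{k}}$), uniformly over $s\in[\alpha n,(1-\alpha)n]$. The only cosmetic difference is that you bound $\binom{n}{s}$ by $2^n$ where the paper keeps the entropy term $h(\delta)$ before choosing the constant ($\beta$ versus $\gamma_\alpha$) large enough to make the exponent negative.
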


Similarly, it is also shown in Appendix~\ref{apdx:small} that there is no $\alpha$-small stopping set if $m(n)=\omega(n\log n)$.
\begin{lemma}[Threshold for small stopping sets]\label{lem:small}
For any $d\in\{2,3,\dotsc,k\}$ and $\alpha\in(0,1/2)$,
%the randomly generated $k$-uniform hypergraph with $m(n)=\mu n\log n$ hyperedges does not have $\alpha$-small $d$-stopping set
$G_k(n, \mu n\log n, 0)$ does not have $\alpha$-small $d$-stopping set
with probability $1-O(n^{-\delta})$
for any $\mu>1/k$
and $\delta\in(0,\mu k-1)$.
%with probability $1-O(n^{1-\mu k+\epsilon})$
%for any $\mu>1/k$
%and $\epsilon\in(0,\mu k-1)$.
\end{lemma}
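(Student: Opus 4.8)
The plan is a direct first-moment computation combined with Markov's inequality --- the ``standard analysis'' alluded to just before the statement. Fix $d\in\{2,\dotsc,k\}$, $\alpha\in(0,1/2)$, and write $m=m(n)=\mu n\log n$. For $1\le s\le\lceil\alpha n\rceil$ let $X_s$ be the number of $d$-stopping sets of size exactly $s$ in $G_k(n,m,0)$. Since the $m$ hyperedges are independent uniform $k$-subsets, a fixed $s$-set $S$ is a $d$-stopping set with probability $p_s^{\,m}$, where
\[
p_s=\binom{n}{k}^{-1}\Bigl(\binom{n-s}{k}+\sum_{j=d}^{k}\binom{s}{j}\binom{n-s}{k-j}\Bigr),
\qquad\text{hence}\qquad
\mathbb{E}[X_s]=\binom{n}{s}\,p_s^{\,m}.
\]
The key estimate is a lower bound on $1-p_s$: as $d\ge2$, the expansion $1-p_s=\binom{n}{k}^{-1}\sum_{j=1}^{d-1}\binom{s}{j}\binom{n-s}{k-j}$ contains the $j=1$ term, so
\[
1-p_s\ \ge\ \binom{n}{k}^{-1}s\binom{n-s}{k-1}\ =\ \frac{ks}{n}\prod_{i=1}^{k-1}\frac{n-s-i+1}{n-i}.
\]

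I would then split the range of $s$. For $1\le s\le\varepsilon n$ (with $\varepsilon>0$ to be chosen), each factor in the product is $1-O(\varepsilon)$ uniformly once $n$ is large, so $1-p_s\ge c\,ks/n$ for a constant $c=c(\varepsilon,k)$ with $c\to1$ as $\varepsilon\to0$; then $p_s^{\,m}\le\exp(-c\,ks\mu\log n)=n^{-cks\mu}$ and, using $\binom{n}{s}\le n^s$, we get $\mathbb{E}[X_s]\le n^{-s(ck\mu-1)}$. Because $\delta<k\mu-1$ we may fix $\varepsilon$ small enough that $ck\mu-1>\delta$, and then $\sum_{s\ge1}\mathbb{E}[X_s]\le\sum_{s\ge1}n^{-s\delta}=O(n^{-\delta})$, dominated by the $s=1$ term. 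For $\varepsilon n<s\le\lceil\alpha n\rceil$ we have $s<n/2$, so the product $\prod_i\frac{n-s-i+1}{n-i}$ stays above a positive constant depending only on $\alpha,k$; hence $1-p_s\ge c'>0$, $p_s^{\,m}\le(1-c')^{\mu n\log n}=\exp(-\Theta(n\log n))$, and with $\binom{n}{s}\le2^n$ each such $\mathbb{E}[X_s]$ is $\exp(-\Theta(n\log n))$, so summing over the $O(n)$ values of $s$ in this range contributes $o(n^{-\delta})$. Adding the two ranges and applying Markov's inequality to $\sum_sX_s$ gives $\Pr[\text{an }\alpha\text{-small }d\text{-stopping set exists}]=O(n^{-\delta})$, which is the claim.

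The calculation is elementary, and the only place needing attention is this split. The bound $1-p_s\gtrsim ks/n$ is tight only while $s=o(n)$, and that is exactly the regime where tightness is needed, since there $\binom{n}{s}=n^{\Theta(s)}$ and $p_s^{\,m}=n^{-\Theta(s)}$ are of comparable order; for $s$ proportional to $n$ the hyperedge count $m=\Theta(n\log n)$ is so large that $p_s^{\,m}$ beats $\binom{n}{s}\le2^n$ by a super-exponential margin and a crude constant bound on $1-p_s$ suffices. A minor additional point is choosing $\varepsilon$ (equivalently the slack $c$) as a function of the gap $k\mu-1-\delta$ so that the geometric sum over small $s$ decays at rate exactly $n^{-\delta}$ rather than only $n^{-(k\mu-1)}$.
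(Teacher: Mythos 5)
Your proposal is correct and follows essentially the same route as the paper: a first-moment count of stopping sets plus Markov's inequality, with the key estimate being the lower bound on the non-stopping probability from the $j=1$ (degree-one intersection) term, i.e.\ $1-p_s\ge s\binom{n-s}{k-1}/\binom{n}{k}$, which yields $p_s^{\,m}\le n^{-k\mu s(1+o(1))}$ for small $s$. The only difference is bookkeeping: the paper sums $\binom{n}{l}x^{l}$ via $(1+x)^n-1$ with a uniform $x=n^{-\mu k(1-\delta)^{k-1}+o(1)}$ over sizes up to a small constant fraction of $n$, whereas you split off the range $\varepsilon n<s\le\lceil\alpha n\rceil$ and crush it with the crude bound $p_s^{\,m}\le\exp\{-\Theta(n\log n)\}$ against $\binom{n}{s}\le 2^n$, which makes your handling of sizes near $\alpha n$ slightly more explicit than the paper's written argument.
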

%The proof is shown in Appendix~\ref{apdx:small}.
Conversely, if $m(n)=\mu n\log n$ for $\mu<1/k$, from the theory of the coupon collector's problem,
with high probability there exists a vertex node which is not connected to any hyperedge node.
Therefore, there exists a $d$-stopping set of size 1 with high probability.
Hence, the constant $1/k$, which appears as a coefficient of $n\log n$, is the sharp threshold 
for the existence of small stopping sets.
While the above two Lemmas are obtained by Markov's inequality and analysis of expected number of $2$-stopping sets,
the analysis of $\alpha$-large $d$-stopping sets requires  more involved analysis of dynamics of the $d$-peeling algorithm.
\if0
\begin{lemma}[{\cite{feldman2005using}}]\label{lem:sslp}
Assume that all local functions are $(r-1)$-wise independent.
If there is a $(k-r+2)$-stopping set $S\subseteq \{\ell(n)+1,\ell(n)+2,\dotsc,n\}$,
%then the BasicLP~\eqref{eq:LP} generated by $Z([n]\setminus S)$ always has a zero-optimal solution whose integral part is $[n]\setminus S$.
then the BasicLP~\eqref{eq:LP} using the leaked variables $[\ell(n)]$ 
cannot distinguish the output of Goldreich's generator from $Z([n]\setminus S)$.
\end{lemma}
\begin{proof}
Let $S\subseteq \{\ell(n)+1,\ell(n)+2,\dotsc,n\}$ be a $(k-r+2)$-stopping set,
and $\bm{x}^*\in[q]^n$ be the assignment in Definition~\ref{def:indLP} for $Z([n]\setminus S)$.
Then, $((b_i)_{i\in [n]},(b_{(a)})_{a\in [m]})$ defined by
$b_i(x_i)=1/q$ for all $i\in S$, $x_i\in[q]$,
$b_i(x^*_i)=1$ for all $i\notin S$ and
\begin{align*}
b_{(a)}(\bm{x}^{(a)})&=\frac1{|f_a^{-1}(\bm{y}_a)\cap\{\bm{x}^{(a)}\in[q]^k\mid x_i=x^*_i \text{ for } i\notin S\}|},\\
&\qquad \forall \bm{x}^{(a)}\in [q]^{k} \quad\text{ satisfying }\quad
x_i= x^*_i \quad \forall i\notin  S \text{ and } f_a(\bm{x}^{(a)})=\bm{y}_a
\end{align*}
for all $a\in[m]$
is an element of $\mathrm{LOCAL}$ from the assumption of $(r-1)$-wise independence of $f_a$.
The above $((b_i)_{i\in[n]},(b_{(a)})_{a\in[m]})$ has a zero objective value since the support of $b_a$ is a subset of the inverse image $f_a^{-1}(\bm{y}_a)$.
\end{proof}

For local functions having MDS inverse image, the converse of Lemma~\ref{lem:sslp} also holds.

\begin{lemma}[{\cite{feldman2005using}}]
\label{lem:lpss}
Assume that all local functions have MDS inverse of the dimension $r-1$.
If the BasicLP~\eqref{eq:LP} has a zero-optimal solution whose integral part is $I\subseteq [n]$, then
$[n]\setminus I$ is $(k-r+2)$-stopping set.
\end{lemma}
\begin{proof}
Let $((b_i)_{i\in[n]},(b_{(a)})_{a\in[m]})\in\mathrm{LOCAL}$ be one of the zero-optimal solution (not necessarily integral) for the BasicLP.
For any $a\in [m]$, $b_{(a)}(\bm{x}^{(a)})>0$ only for $\bm{x}^{(a)}$ satisfying the constraint $f_a(\bm{x}^{(a)})=\bm{y}_a$
since $((b_i)_{i\in[n]},(b_{(a)})_{a\in[m]})$ has a zero objective value.
Hence, if $|\partial (a)\cap I|\ge r-1$, then $|\partial (a)\cap I|$ must be $k$.
\end{proof}
Lemmas~\ref{lem:sslp} and \ref{lem:lpss} show that the stopping set characterizes the existence of non-integral solution
of the BasicLP for Goldreich's generator using $(r-1)$-wise independent local functions.
\fi
\if0
From Lemma~\ref{lem:sslp}, it is natural to define the indistinguishability by the BasicLP as follows.

\begin{definition}[Indistinguishability by the BasicLP]\label{def:ind}
Let $G\colon [q]^n\to[q]^{hm}$ be a Goldreich's generator using $(r-1)$-wise independent local functions.
Let $L\subseteq[n]$ be the set of leaked variables.
For some superset $\bar{L}$ of $L$, i.e., $L\subseteq \bar{L}\subseteq [n]$, 
let $F(\bar{L})\subseteq [m]$ be the set of indices of local functions which only include
variables in $\bar{L}$, i.e., $F(\bar{L}):=\{a\in[m]\mid \partial (a)\subseteq \bar{L}\}$.
Let $G|_{\bar{L}}\colon [q]^{|\bar{L}|}\to [q]^{h F(\bar{L})}$ be the restriction of $G$ on $\bar{L}$,
i.e., $G|_{\bar{L}}(\bm{x}_{\bar{L}}) := (f_a(\bm{x}^{(a)}))_{a\in F(\bar{L})}$.
Let $Y({\bar{L}})\in [q]^{hm}$ be a random variable obeying
\begin{equation*}
\Pr(Y({\bar{L}})=\bm{y}) = \frac1{q^{h(m-|F(\bar{L})|)}}
% \frac{\mathbb{I}\left\{\bm{y}_{F(\bar{L})}\in \mathrm{Im}(G_{\bar{L}})\right\}}{|\mathrm{Im}(G_{\bar{L}})|}
\sum_{\bm{x}_{\bar{L}}\in[q]^{|\bar{L}|}} \frac1{q^{|\bar{L}|}}\mathbb{I}\left\{G|_{\bar{L}}(\bm{x}_{\bar{L}}) = \bm{y}_{F(\bar{L})}\right\}.
\end{equation*}
%where $\mathrm{Im}(G_{\bar{L}})\subseteq[q]^{h|F(\bar{L})|}$ is the image of $G_{\bar{L}}$.
Then, we say that \textit{the BasicLP cannot distinguish $G(U_n)$ from $Y({\bar{L}})$ on leaked variables $L$}
if $[n]\setminus\bar{L}$ is a stopping set.
Here, $h(m-|F(\bar{L})|)$ is called the number of random symbols in $Y({\bar{L}})$.
\end{definition}
\fi
%
%\subsection{Main theorems}
%In this section, the main results of this paper are restated in a formal manner.
%In this section, the main results of this paper are shown.
%
\if0
\begin{theorem}\label{thm:main0}
%Assume that all local functions are $(r-1)$-wise independent for some $r\ge 3$.
If $\ell(n)\in \omega(1)\cap o(n)$ input variables are leaked,
and if $m(n)=\mu \frac{n^{r-1}}{\ell(n)^{r-2}}$ for arbitrary constant $\mu < \mu_{\mathrm{c}}(k,r)$,
% then the BasicLP attack cannot distinguish the output of Goldreich's generator from some random variables close to $U_m$
%with high probability.
%using $\ell(n)$ known input variables cannot
% then the BasicLP has zero-optimal solution whose integral part is at most $\frac{r-1}{r-2} \ell(n)$ 
 then the BasicLP cannot distinguish the output of Goldreich's generator from some random variable 
%including at least $h\bigl(m- [(r-2)(k-r+2)]^{-1} \ell(n)\bigr)$ independent and uniformly distributed random symbols
including at least $h\bigl(m- \Theta(\ell(n))\bigr)$ independent and uniformly distributed random symbols
with probability exponentially close to 1 with respect to $\ell(n)$.
\end{theorem}

\begin{theorem}\label{thm:main1}
Assume that all local functions have $(r-1)$-dimensional MDS inverse image for some $r\ge 3$.
%If $m(n)=\mu \frac{n^{r-1}}{\ell(n)^{r-2}}$ for arbitrary constant $\mu > \mu_{\mathrm{c}}(k,r)$,
%and if $\ell(n)\in \omega(1)\cap o(n)$ input variables are leaked,
If $\ell(n)\in \omega(1)\cap o(n/(\log n)^{1/(r-2)})$ input variables are leaked,
and if $m(n)=\mu \frac{n^{r-1}}{\ell(n)^{r-2}}$ for arbitrary constant $\mu < \mu_{\mathrm{c}}(k,r)$,
 then the BasicLP generated by a planted solution has unique optimal solution
% then any zero-optimal solution of the BasicLP has an integral part of size $n-o(n)$
% linearly many number of input variables of the randomly generated Goldreich's generator 
with probability exponentially close to 1 with respect to $\ell(n)$.
When the number of leaked input variables is $\ell(n)\in \Omega(n/(\log n)^{1/(r-2)})\cap o(n)$,
any zero-optimal solution of the BasicLP has an integral part of size $n-o(n)$
with probability exponentially close to 1 with respect to $\ell(n)$.
\end{theorem}
%From Lemmas~\ref{lem:sslp} and \ref{lem:lpss} and Definition~\ref{def:ind},
 Theorems~\ref{thm:main0} and \ref{thm:main1} imply Theorems~\ref{thm:main00} and~\ref{thm:main11}, respectively.
%\subsection{Asymptotic analysis of stopping sets}
%
%From Lemmas~\ref{lem:sslp} and \ref{lem:lpss}, it is sufficient to analyze $(k-r+2)$-stopping sets
%for showing Theorems~\ref{thm:main0} and \ref{thm:main1}.
From Lemmas~\ref{lem:sslp} and \ref{lem:lpss},
Theorems~\ref{thm:main0} and~\ref{thm:main1} can be shown by analyzing the $(k-r+2)$-stopping sets.
\fi
\if0
For $r=2$, the condition $m(n)\ge(1/k+\delta)n\log n$ for vanishing small stopping sets is stronger than
the condition $m(n)\ge ([k(k-1)]^{-1}+\delta) n$ for vanishing large stopping sets.
On the other hand, for $r\ge 3$, the situation is different unless $\ell(n)$ is quite large,
i.e., $\ell(n)=\Omega(n/(\log n)^{1/(r-2)})$.
\fi
%
%Let $L$ be the set of vertex nodes of size $\ell(n)\in \omega(1)\cap o(n)$ which is removed from the ramdomly generated $k$-uniform hypergraph.
%In this paper, $L$ does not depends on the realization of random hypergraph.
%Hence, without loss of generality, we can assume that $L$ is the set of $\ell(n)$ vertices with small indices.
%For large stopping sets in $V\setminus L$, 
%%we show that $\mu_\mathrm{c}(k,r)$ is the sharp threshold constant which appears
%%as a coefficient of $\frac{n^{r-1}}{\ell(n)^{r-2}}$.
%we show the phase transition phenomenon as follows.
%
Recall $\ell(n)\in\omega(1)\cap o(n)$.
The followings results on large stopping sets are shown in the next section.

\begin{theorem}\label{thm:ML}
Fix $r\ge 3$.
%For any constant $\mu<\mu_\mathrm{c}(k,r)=\binom{k}{r}^{-1}\frac{(r-2)^{r-2}}{r(r-1)^{r-1}}$,
For any constant $\mu<\mu_\mathrm{c}(k,r)$,
%the randomly generated $k$-uniform hypergraph with $m(n)=\mu \frac{n^{r-1}}{\ell(n)^{r-2}}$ hyperedges has
$G_k(n,\mu \frac{n^{r-1}}{\ell(n)^{r-2}}, \ell(n))$ has
$(k-r+2)$-stopping set of size larger than $n-(1+\tau) \ell(n)$ with probability at least $1-p(n, \mu, \tau)$
for any $\tau>\tau^*$ where $\tau^*\in(0,1/(r-2))$ is the unique solution in $(0,1/(r-2))$ of
\begin{equation*}
\mu = \frac1{\binom{k}{r}}\frac{\tau^*}{r(1+\tau^*)^{r-1}}.
\end{equation*}
Here, the probability $p(n,\mu,\tau)$ is 
\begin{equation*}
 \exp\left\{\inf_{\lambda>0,\tau'\in(\tau^*,\tau)}\left\{ \varphi_{k,r}(\mu,\lambda,\tau')\right\} \ell(n)  + O(\max\{\ell(n)^2/n,1\})\right\}
\end{equation*}
where
\begin{equation}\label{eq:Ek}
\varphi_{k,r}(\mu,\lambda,\tau):= \mu \left(\exp\{(k-r+1)\lambda\}-1\right)\binom{k}{r-1}(1+\tau)^{r-1}- \lambda\tau.
\end{equation}
%Here, the number of constraint vertices not including variables in the $(k-r+2)$-stopping set of size at least $n-(1+\tau)\ell(n)$ 
%is at most $[\tau/(k-r+2)] \ell(n)$.
\end{theorem}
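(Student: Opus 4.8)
The plan is to turn the statement into a large‑deviation bound on the number of vertices removed by the $(k-r+2)$‑peeling algorithm, and then to control that number by an exponential (moment generating function) martingale run along a breadth‑first exploration of the hypergraph. Since the $d$‑peeling algorithm is confluent and stops at the unique maximal $(k-r+2)$‑stopping set, whose complement is exactly the set $S$ of removed vertices (and $S$ contains the $\ell(n)$ initially removed vertices), a $(k-r+2)$‑stopping set of size larger than $n-(1+\tau)\ell(n)$ exists if and only if $|S|<(1+\tau)\ell(n)$; so it suffices to prove $\Pr[\,|S|\ge(1+\tau)\ell(n)\,]\le p(n,\mu,\tau)$. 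Two structural facts drive the exponent. First, the hyperedges entirely contained in $S$ are exactly those with at least $r-1$ endpoints in $S$ (otherwise peeling has not terminated), their number $A$ equals the number of ``absorptions'' performed, and each absorption brings in at most $k-r+1$ previously unremoved vertices, so $|S|-\ell(n)\le(k-r+1)A$. Second, the expected number of hyperedges with $\ge r-1$ endpoints in a set of size $s$ is $\binom{k}{r-1}\mu\,(s/\ell(n))^{r-1}\ell(n)\,(1+o(1))$, whose derivative in $s$ is $\rho(s):=(r-1)\binom{k}{r-1}\mu\,(s/\ell(n))^{r-2}\,(1+O(s/n))$; this is where the factor $\binom{k}{r-1}(1+\tau)^{r-1}$ of Theorem~\ref{thm:ML} originates.

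Concretely I would run the exploration that maintains a queue of hyperedges that have become entirely removed but not yet propagated, initialized with the $Q_0$ hyperedges having $\ge r-1$ endpoints among the initial $\ell(n)$ vertices (so $\mathbb{E}[Q_0]=\binom{k}{r-1}\mu\,\ell(n)(1+o(1))$), and repeatedly dequeues a hyperedge, adds its at most $k-r+1$ not‑yet‑removed endpoints to $S$ one at a time, and enqueues every hyperedge that newly acquires $r-1$ removed endpoints. Revealing the incidences in this order keeps each freshly exposed hyperedge conditionally close to uniform, so that when the $j$‑th non‑initial vertex is added (making $|S|=\ell(n)+j$), the number of newly absorbed hyperedges is stochastically dominated by a Poisson variable of mean $\rho(\ell(n)+j)$. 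Let $T'$ be the stopping time ``first time $|S|\ge(1+\tau')\ell(n)$ or the queue empties'' for a fixed $\tau'\in(\tau^*,\tau)$; then the number $J$ of added vertices before $T'$ is deterministically at most $\tau'\ell(n)+O(1)$ and $|S|$ never exceeds $(1+\tau')\ell(n)$ beforehand. Using $\mathbb{E}[z^X]=\exp\{\nu(z-1)\}$ for $X\sim\mathrm{Poisson}(\nu)$ with $z=e^{(k-r+1)\lambda}$, a tower/optional‑stopping argument on the moment generating function of $A$ gives
\[
\mathbb{E}\bigl[e^{\lambda(|S_{T'}|-\ell(n))}\bigr]\le\mathbb{E}\bigl[z^{A(T')}\bigr]\le\exp\{\Psi^{\ast}\},\qquad \Psi^{\ast}:=\Bigl(\mathbb{E}[Q_0]+\textstyle\sum_{j=1}^{J}\rho(\ell(n)+j)\Bigr)(z-1),
\]
and summing $\rho$, i.e.\ essentially integrating it in $s$ from $\ell(n)$ to $(1+\tau')\ell(n)$ and adding $\mathbb{E}[Q_0]$, yields $\Psi^{\ast}\le \binom{k}{r-1}\mu\,(1+\tau')^{r-1}\ell(n)\,\bigl(e^{(k-r+1)\lambda}-1\bigr)\,\bigl(1+O(\ell(n)/n)\bigr)$.

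Now apply Markov's inequality: on $\{|S|\ge(1+\tau')\ell(n)\}$ one has $|S_{T'}|-\ell(n)\ge\tau'\ell(n)$, so
\[
\Pr[\,|S|\ge(1+\tau')\ell(n)\,]\le e^{-\lambda\tau'\ell(n)}\,\mathbb{E}\bigl[e^{\lambda(|S_{T'}|-\ell(n))}\bigr]\le\exp\bigl\{\varphi_{k,r}(\mu,\lambda,\tau')\,\ell(n)+O(\max\{\ell(n)^2/n,1\})\bigr\},
\]
the error term absorbing all the multiplicative $1+O(\ell(n)/n)$ corrections (the ``$1$'' guaranteeing it stays at least a constant when $\ell(n)=o(\sqrt n)$). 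Since $\{|S|\ge(1+\tau)\ell(n)\}\subseteq\{|S|\ge(1+\tau')\ell(n)\}$ for every $\tau'\in(\tau^*,\tau)$ and every $\lambda>0$ is admissible, taking the infimum over $\lambda>0$ and $\tau'\in(\tau^*,\tau)$ reproduces exactly $p(n,\mu,\tau)$. Finally $\varphi_{k,r}(\mu,0,\tau')=0$ and $\partial_\lambda\varphi_{k,r}(\mu,\lambda,\tau')\big|_{\lambda=0}=r\binom{k}{r}\mu(1+\tau')^{r-1}-\tau'$, which is strictly negative for $\tau'$ slightly above $\tau^*$ because $\tau^*$ is the \emph{smaller} root of $\mu=\binom{k}{r}^{-1}\tau/(r(1+\tau)^{r-1})$ and $\tau^*<1/(r-2)$ (equivalently, the fixed‑point map $x\mapsto1+r\binom{k}{r}\mu x^{r-1}$ is subcritical just above $1+\tau^*$); hence the infimum is strictly negative and $p(n,\mu,\tau)=e^{-\Omega(\ell(n))}\to0$.

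The step I expect to be the main obstacle is making the per‑step Poisson domination rigorous while $S$ is itself random and of size $\Theta(\ell(n))$: one must fix an order of exposing hyperedge incidences under which every newly seen hyperedge remains close to uniformly random, and one must bound the ``retroactive'' absorptions — hyperedges seen early with fewer than $r-1$ removed endpoints that cross the threshold only later — which is precisely where the $O(\ell(n)^2/n)$ loss enters. Stopping the exploration at the first time the removed set reaches $(1+\tau')\ell(n)$ is what keeps the accumulated rate, and hence $\Psi^{\ast}$, genuinely bounded, and is the reason the estimate is phrased through $\varphi_{k,r}$ evaluated at the free auxiliary parameter $\tau'\in(\tau^*,\tau)$ rather than at $\tau$ itself. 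Once this revelation bookkeeping is in place, the martingale and Markov steps are routine.
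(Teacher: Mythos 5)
Your proposal is correct in substance and arrives at the right exponent, but it takes a genuinely different technical route from the paper. The paper never runs an exploration with a queue or a stopping time: it formulates the peeling as a Markov chain on the degree counts $[C_0(t),\dotsc,C_k(t)]$, passes to a dominating chain in which $\overline{E}_1^{k-r+1}(t)$ decreases by exactly one per step and gains $(k-r+1)\overline{R}_{k-r+2}(t)$, and then computes the moment generating function of $\overline{E}_1^{k-r+1}(\lfloor\tau\ell(n)\rfloor)$ at a \emph{fixed deterministic time} essentially exactly, via a backward recursion on tilting parameters $\lambda_j^{(s)}$ (Theorem~\ref{thm:UG}, the ``evolution of the moment generating function''), finishing with the Chernoff bound $\Pr(\overline{E}\ge 0)\le\mathbb{E}[e^{\lambda\overline{E}}]$. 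You instead bound the number of removed vertices through the absorbed-hyperedge count $A$, using per-step Poisson-type bounds and optional stopping at the first time $|S|\ge(1+\tau')\ell(n)$; your stopped quantity $(k-r+1)A(T')$ plays the role of the paper's $\overline{E}_1^{k-r+1}(t)+t$, and both yield $\varphi_{k,r}(\mu,\lambda,\tau')$ with the same $O(\max\{\ell(n)^2/n,1\})$ error. Two remarks on your flagged obstacle: with i.i.d.\ uniform hyperedges it suffices to reveal, at each vertex addition, only each hyperedge's absorbed/not-absorbed status, so unabsorbed hyperedges stay uniform conditioned on having at most $r-2$ endpoints in the current removed set and the ``retroactive'' worry largely disappears into the $1+O((s/n)^{r-1})$ conditioning corrections; and although $\mathrm{Binom}(m,p)$ is \emph{not} stochastically dominated by $\mathrm{Poisson}(mp)$, you only ever use the inequality $1+p(z-1)\le e^{p(z-1)}$ at the level of moment generating functions, so your argument is unaffected. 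What your route buys is a more elementary, self-contained one-sided tail bound; what the paper's exact MGF evolution buys is reusability — the same recursion gives Theorems~\ref{thm:LG} and~\ref{thm:LG2} for the dominated chain and hence the matching converse (Theorem~\ref{thm:MU}), which a supermartingale upper bound alone would not provide.
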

%The proof of Theorem~\ref{thm:ML} is shown in Section~\ref{sec:evo}.
%Theorem~\ref{thm:ML} implies Theorem~\ref{thm:main0}

\if0
From Theorem~\ref{thm:ML} and Lemma~\ref{lem:sslp}, for any constant $\mu < \mu_\mathrm{c}(k,r)$,
the BasicLP~\eqref{eq:LP} with the $\ell(n)$ correctly assigned variables for the planted $k$-CSP
problem including $m(n)=\mu\frac{n^{r-1}}{\ell(n)^{r-2}}$ constraints has a solution whose integral part is at most $(1+\tau)\ell(n)$ with high probability.
Hence, Theorem~\ref{thm:main0} is shown.
The converse is also obtained as follows.
\fi

\begin{theorem}\label{thm:MU}
Fix $r\ge 3$.
For any $\alpha\in(0,1/2)$ and
for any constant $\mu>\mu_\mathrm{c}(k,r)$,
%the randomly generated $k$-uniform hypergraph with $m(n)=\mu \frac{n^{r-1}}{\ell(n)^{r-2}}$ hyperedges does not have
%$\alpha$-large $(k-r+2)$-stopping set in $V\setminus L$ with probability at least
$G_k(n,\mu \frac{n^{r-1}}{\ell(n)^{r-2}},\ell(n))$ does not have
$\alpha$-large $(k-r+2)$-stopping set with probability at least
\begin{align*}
&1-\exp\left\{\sup_{\tau >0}\inf_{\lambda<0}\allowbreak \left\{\varphi_{k,r}(\mu,\lambda,\tau)\right\} \ell(n) + O(\max\{\ell(n)^2/n,\,\log \ell(n)\})\right\}.
%&=1-\rho^{\frac{1-(r-2)\tau}{r-1}\ell(n) + O(\max\{\ell(n)^2/n,\,\log \ell(n)\})}
\end{align*}
Here, it holds
\begin{equation*}
\exp\left\{\sup_{\tau >0}\inf_{\lambda<0}\allowbreak \left\{\varphi_{k,r}(\mu,\lambda,\tau)\right\}  \right\}
=\rho^{\frac{1-(r-2)\tau}{r-1}}
\end{equation*}
where $(\rho,\tau)$ is the solution of
\begin{align}
\label{eq:rho}
\rho &= \exp\left\{\mu \binom{k}{r-2}(1+\tau)^{r-2}(k-r+2) \left(\rho^{k-r+1}-1\right) \right\}\\
\mu\rho^{k-r+1} &= \frac{\tau}{\binom{k}{r}r(1+\tau)^{r-1}}.\label{eq:tau}
\end{align}
\end{theorem}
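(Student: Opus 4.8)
The plan is to convert the statement into one about the peeling \emph{dynamics} and then control that dynamics through the evolution of a moment generating function. Since the $(k-r+2)$-peeling algorithm terminates at the largest $(k-r+2)$-stopping set, and such a stopping set is exactly the complement of a set $T\supseteq[\ell(n)]$ that is closed under the peeling rule, $G_k(n,m(n),\ell(n))$ has no $\alpha$-large stopping set if and only if the peeling process started from the $\ell(n)$ removed vertices removes at least $\lceil\alpha n\rceil$ vertices; so it suffices to bound, for each halting size $t$ with $\ell(n)\le t<\lceil\alpha n\rceil$, the probability that the process halts with exactly $t$ removed vertices, and to sum. Two ranges are easy. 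For $\delta n\le t$ with a small constant $\delta>0$, a halting set of size $\Theta(n)$ forces every one of the $m(n)=\omega(n)$ hyperedges to avoid a $\Theta_\delta(1)$-probability event, so a union bound over the at most $2^n$ candidate sets gives probability $\exp\{O(n)-\omega(n)\}$, negligible against the claimed bound. For $t=\ell(n)$, the probability is that no hyperedge has between $r-1$ and $k-1$ of its vertices in the initial set; since a uniform hyperedge has this property with probability $\binom{k}{r-1}(\ell(n)/n)^{r-1}(1+o(1))$ and $m(n)(\ell(n)/n)^{r-1}=\mu\,\ell(n)$, the probability equals $\exp\{-\mu\binom{k}{r-1}\ell(n)+O(\ell(n)^2/n)\}$, which is $\exp\{\varphi_{k,r}(\mu,\lambda,0)\,\ell(n)\}$ in the limit $\lambda\to-\infty$.

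The intermediate range $\ell(n)<t\le\delta n$, where $t=o(n)$, is the crux. A plain first moment fails here because the entropy $\binom{n}{t}$ of choosing the complement overwhelms the probability that it is a stopping set; this is exactly why Wormald's method and naive first moments are unavailable, and it is where the evolution of the moment generating function enters. I would track the Markov chain given by the counts $N_j$ of hyperedges having exactly $j$ of their vertices removed, as the removed set $T_s$ grows through the process with the hypergraph exposed incrementally (each newly removed vertex reveals its incidences). The structural fact that defeats the entropy is that a halting set $T$ of size $t$ is the union of $[\ell(n)]$ with the vertex sets of the processed hyperedges, whose number equals $N_k(T)$ and is at least $(t-\ell(n))/(k-r+1)$; so one encodes $T$ by this hyperedge family rather than by an arbitrary set, and, conditionally on it and on $|T|=t$, the remaining hyperedges must each independently avoid having $r-1,\dots,k-1$ vertices in $T$. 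Writing $t=(1+\tau)\ell(n)$ and following a moment generating function of the form $\sum_{e}\exp(\lambda\,|\partial e\cap T_s|)$ --- whose conditional one-step change is multiplicative with a controllable factor, so that the profile stays near a deterministic fluid trajectory (whose solution yields~\eqref{eq:rho}) except with exponentially small probability --- an exponential-tilting/Markov-inequality argument on this generating function, with $\lambda<0$ because halting early is a lower-tail event, gives probability $\exp\{\varphi_{k,r}(\mu,\lambda,\tau)\,\ell(n)+O(\ell(n)^2/n)\}$ for each $t$, the error term absorbing the difference between choosing $k$-subsets without replacement and the independent/Poisson approximation. The terms with $t=\omega(\ell(n))$ are super-exponentially smaller, since $\inf_{\lambda<0}\varphi_{k,r}(\mu,\lambda,\tau)\to-\infty$ like $-\tau^{r-1}$ as $\tau\to\infty$, so effectively only $O(\ell(n))$ values of $t$ contribute, costing a further $O(\log\ell(n))$ in the exponent. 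The hypothesis $\mu>\mu_{\mathrm{c}}(k,r)$ is precisely the condition under which $\sup_{\tau>0}\inf_{\lambda<0}\varphi_{k,r}(\mu,\lambda,\tau)<0$, equivalently under which the fluid trajectory escapes the $t=o(n)$ regime without a halting fixed point, yielding the stated probability bound.

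Finally, to obtain the closed form: stationarity of $\varphi_{k,r}(\mu,\lambda,\tau)$ in $\lambda$ gives, with $\rho:=e^{\lambda}$ and the identity $(k-r+1)\binom{k}{r-1}=r\binom{k}{r}$, equation~\eqref{eq:tau}; imposing then stationarity in $\tau$ gives~\eqref{eq:rho}; and substituting back into $\varphi_{k,r}$ gives the optimal value $\rho^{(1-(r-2)\tau)/(r-1)}$. The main obstacle is the intermediate range: making this rigidity argument quantitative enough --- controlling the multiplicative evolution of the moment generating function of the hyperedge-degree profile precisely enough --- that the combinatorial entropy $\binom{n}{t}$ is compensated down to the \emph{exact} constant $\sup_{\tau}\inf_{\lambda}\varphi_{k,r}(\mu,\lambda,\tau)$ in the exponent, not merely to the order $\ell(n)$, all while the counts $N_j$ are of wildly different magnitudes (some sublinear, some superlinear), which is the very obstruction that rules out Wormald's differential-equation method.
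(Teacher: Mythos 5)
Your treatment of the small-$\tau$ regime and of the closed form is essentially the paper's: a dominated/untilted version of the degree-count Markov chain, evolution of its moment generating function, a Chernoff bound with $\lambda<0$ for the lower-tail (halting) event, a union bound over the $O(\ell(n))$ steps giving the $O(\log\ell(n))$ correction, and the stationarity conditions in $\lambda$ and $\tau$ (with $\rho=e^{\lambda}$ and $(k-r+1)\binom{k}{r-1}=r\binom{k}{r}$) yielding \eqref{eq:tau}, \eqref{eq:rho} and the value $\rho^{(1-(r-2)\tau)/(r-1)}$. That part is sound.

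The genuine gap is in the intermediate range of halting sizes. You claim that for every $t=(1+\tau)\ell(n)$ up to $\delta n$ the halting probability is bounded by $\exp\{\inf_{\lambda<0}\varphi_{k,r}(\mu,\lambda,\tau)\,\ell(n)+O(\ell(n)^2/n)\}$, and that the terms with $t=\omega(\ell(n))$ decay like $\exp\{-c\tau^{r-1}\ell(n)\}$. But the evolution-of-the-MGF estimate (Theorem~\ref{thm:LG}, and the lemma on $\lambda_j^{(t)}$ behind it) is derived only for $t=O(\ell(n))$, i.e.\ constant $\tau$: the induction uses $\exp\{\lambda_j^{(s)}\}-1=o(1)$ and expansions such as $\log(1+x)\approx x$, with error terms of the form $O(\ell(n)^{j-k+r-2}/n^{j-k+r-1})$, all of which fail once $t$ grows, and break down completely when $t$ is a constant fraction of $n$. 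So the asserted per-$t$ bound for $\omega(\ell(n))\le t\le\delta n$ is an extrapolation of the constant-$\tau$ formula, not something your method establishes; a static union bound over candidate sets also does not cover the window just above $\Theta(\ell(n))$ (the entropy $\binom{n}{t}\approx e^{t\log(n/t)}$ beats $e^{-ct^{r-1}/\ell(n)^{r-2}}$ until roughly $t\gtrsim\ell(n)(\log n)^{1/(r-2)}$), and the "encode $T$ by the processed hyperedges" idea is mentioned but never actually used to close this window. The paper avoids the large-$\tau$ regime altogether: it runs the MGF argument only to $t=\lfloor\tau\ell(n)\rfloor$ for a large constant $\tau$, shows via Theorem~\ref{thm:LG2} that at that time $\underline{C}_{k-r+2}(t)$ exceeds the giant-component threshold $[(k-r+2)(k-r+1)]^{-1}n$ while $\underline{E}_1^{k-r+1}(t)\ge\eta\ell(n)$, then invokes the giant-component argument of Lemma~\ref{lem:2large} (failure probability $\rho^{\eta\ell(n)}$) to jump directly to linearly many removals, and finally Lemma~\ref{lem:linear} to exclude linear-size stopping sets. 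Some replacement for this pivot (or a genuinely uniform-in-$\tau$ MGF analysis, which you would have to prove) is needed; as written, the range $\omega(\ell(n))\le t\le\delta n$ is not covered. A secondary, smaller omission: you never set up the domination that removes the conditioning $E_1^{k-r+1}(t)\ge1$ from the true chain \eqref{eq:MarkovE}, which is what licenses applying the MGF bound to the halting event in the first place.
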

Note that for $\mu>(k(k-1))^{-1}$, $G_k(n, \mu n, 0)$ has a giant component whose size is concentrated around $(1-\rho)n$ where
$\rho$ satisfies~\eqref{eq:rho} for $r=2$~\cite{RSA:RSA20160}.
Hence, the equations~\eqref{eq:rho} and \eqref{eq:tau} may give the generalized concept of ``size of giant component'' (See also Appendix~\ref{apx:r2}).
%on the concept of propagation connectivity.
%The proof of Theorem~\ref{thm:MU} is shown in Section~\ref{sec:evo}.
%The proof of Theorem~\ref{thm:MU} is shown in Appendix~\ref{apx:lower}.
%Lemma~\ref{lem:small} and Theorems~\ref{thm:ML} and~\ref{thm:MU} imply Theorems~\ref{thm:main0} and~\ref{thm:main1}.

\if0
From Theorem~\ref{thm:MU} and Lemma~\ref{lem:lpss}, one obtains Theorem~\ref{thm:main1}.
The above results on thresholds of stopping sets are summarized in Table~\ref{tbl:ss}.
Lemma~\ref{lem:2ML} and Theorems~\ref{thm:ML} and \ref{thm:MU} are proved in the following sections.
\fi

\if0
\begin{table}
\renewcommand{\arraystretch}{1.3}
\caption{Thresholds of the number of constraints for $(k-r+2)$-stopping sets}
\label{tbl:ss}
\centering
\begin{tabular}{c|c|c}
\hline
& Small & Large\\
\hline
$r=2$ & $\frac1{k} n\log n$& $\frac1{k(k-1)}n$\\
\hline
$r\ge 3$ & $\frac1{k} n\log n$& $\frac1{\binom{k}{r}} \frac{(r-2)^{r-2}}{r(r-1)^{r-1}} \frac{n^{r-1}}{\ell(n)^{r-2}}$\\
\hline
\end{tabular}
\end{table}
\fi

\section{Evolution of the number of hyperedges in the peeling algorithm}\label{sec:evo}
\subsection{The Markov chain}
In this section, we analyze the numbers of hyperedges at each step of the iterations of the $(k-r+2)$-peeling algorithm
on $G_k(n,m(n),\ell(n))$.
%on the randomly generated bipartite graph where the set $L$ is removed.
%, which can be regarded as a variant of
%Erd\H{o}s-R\'{e}nyi hypergraph~\cite{RSA:RSA20160}.
In this section, we deal with arbitrary fixed $r\ge 2$.
For the analysis, we assume that only one hyperedge node $e\in E$ of degree at most $k-r+1$ is chosen in each step
and that only one of the vertex node connected to the hyperedge node $e$ is removed
from the bipartite graph.
Note that the scheduling of the peeling algorithm does not affect to the remaining graph after the termination of the peeling algorithm.
Let $C_j(t)$ be a random variable corresponding to the number of hyperedge nodes of degree $j$ after $t$ iterations
for $j\in[k]:=\{1,2,\dotsc,k\}$ and $t\in\{1,2,\dotsc\}$.
Obviously, $[C_0(0),\dotsc,C_k(0)]$ obeys the multinomial distribution
$\mathrm{Multinom}(m(n), p_0(n), p_1(n),\dotsc, p_k(n))$ where
\begin{align*}
p_j(n) := \frac{\binom{n-\ell(n)}{j}\binom{\ell(n)}{k-j}}{\binom{n}{k}}=
\binom{k}{j}\frac{\ell(n)^{k-j}}{n^{k-j}} + O\left(\frac{\ell(n)^{k-j+1}}{n^{k-j+1}}\right).
\end{align*}
Let $[B_{1}(t),B_2(t),\dotsc, B_{k-r+1}(t)]$ be a 0-1 random vector of weight 1 where $B_j(t)=1$ if a hyperedge node of degree $j$ is chosen at $(t+1)$-th
iteration
and $B_j(t)=0$ otherwise.
We assume that a hyperedge node is chosen uniformly from all hyperedge nodes of degree at most $k-r+1$.
Hence,
\begin{equation*}
%\Pr(B_j(t)=1\mid [C_0(t),\dotsc,C_k(t)]) = C_j(t)/\sum_{j'=1}^{k-r+1} C_{j'}(t)
\Pr\left(B_j(t)=1\mid [C_0(t),\dotsc,C_k(t)]\right) = \frac{C_j(t)}{\sum_{j'=1}^{k-r+1} C_{j'}(t)}
\end{equation*}
if $\sum_{j=1}^{k-r+1}C_j(t)\ge 1$.
Note that the distribution of $[B_{1}(t),B_2(t),\dotsc, B_{k-r+1}(t)]$ is not used in the following analysis.
Let $N(t):=n-\ell(n)-t$ be the number of remaining vertex nodes after $t$ iterations when the iterations continues until the $t$-th step.
The set of random variables $([C_0(t),\dotsc,C_k(t)])_{t=0,1,\dotsc,N(0)}$
is a Markov chain satisfying $[C_0(t+1),\dotsc,C_k(t+1)]=[C_0(t),\dotsc,C_k(t)]$ if $\sum_{j=1}^{k-r+1}C_j(t)=0$ and
\begin{equation}
\begin{split}
C_k(t+1)&= C_k(t) - R_k(t)\\
C_{j}(t+1)&= C_{j}(t) - R_{j}(t) + R_{j+1}(t),\hspace{2em} \text{for}\hspace{1em} j=1,2,\dotsc,k-1\\
C_0(t+1)&= C_0(t) + R_1(t)
\end{split}
\label{eq:Markov}
\end{equation}
if $\sum_{j=1}^{k-r+1}C_j(t)\ge 1$
where $R_1(t),\dotsc, R_k(t)$ are independent random variables conditioned on $[C_0(t),\dotsc,C_k(t)]$ and $[B_1(t),\dotsc,B_{k-r+1}(t)]$
obeying
\begin{align*}
R_j(t) &\sim \mathrm{Binom}\left(C_j(t), \frac{j}{N(t)}\right),\hspace{2em} \text{for}\hspace{1em} j=k-r+2, k-r+3,\dotsc,k\\
R_j(t) &\sim B_j(t)+\mathrm{Binom}\left(C_j(t)-B_j(t), \frac{j}{N(t)}\right),\hspace{2em} \text{for}\hspace{1em} j=1,2,\dotsc,k-r+1.
\end{align*}
Let $E_1^{k-r+1}(t):= \sum_{j=1}^{k-r+1} j C_j(t)$ be the number of edges connected to hyperedge nodes of degree at most $k-r+1$.
%Then, the probability that the randomly generated bipartite graph does not have $(k-r+2)$-stopping set of size larger than $n-\ell(n)-t$ in $V\setminus L$ is
Then, the probability that $G_k(n,m(n),\ell(n))$ does not have $(k-r+2)$-stopping set of size larger than $n-\ell(n)-t$ is
%in $\{\ell(n)+1,\dotsc,n\}$ is
exactly equal to
\begin{equation}
\Pr\left(E_1^{k-r+1}(0)\ge 1,E_1^{k-r+1}(1)\ge 1,\dotsc,E_1^{k-r+1}(t-1)\ge 1\right).
\label{eq:ps}
\end{equation}
%For proving Lemma~\ref{lem:2ML}, Theorems~\ref{thm:ML} and \ref{thm:MU}, we analyze the probability~\eqref{eq:ps}.
For proving Theorems~\ref{thm:ML} and \ref{thm:MU}, we analyze the probability~\eqref{eq:ps}.
Similar analysis was considered in~\cite{Luby:1997:PLC:258533.258573},  \cite{Achlioptas2001159}, \cite{connamacher2012exact},
in which the number of hyperedge nodes $m(n)$ is proportional to $n$.
In that case, one can use Wormald's theorem, which gives differential equations describing the behavior of the Markov chain at
the limit $n\to\infty$~\cite{wormald1995differential}.
In this paper, $m(n)$ is not necessarily proportional to $n$.
Therefore, different techniques are required.

\subsection{Dominating Markov chain}
%In this section, we prove Theorem~\ref{thm:ML} and Lemma~\ref{lem:2ML}.
In this section, we prove Theorem~\ref{thm:ML}.
For the Markov chain~\eqref{eq:Markov}, it holds
\begin{equation}
\begin{split}
C_k(t+1)&= C_k(t) - R_k(t)\\
C_{j}(t+1)&= C_{j}(t) - R_{j}(t) + R_{j+1}(t),\hspace{2em} \text{for}\hspace{1em} j=k-r+2,k-r+3,\dotsc,k-1\\
E_1^{k-r+1}(t+1)&= E_1^{k-r+1}(t) - \sum_{j=1}^{k-r+1}R_j(t) + (k-r+1) R_{k-r+2}(t)
\end{split}
\label{eq:MarkovE}
\end{equation}
if $E_1^{k-r+1}(t)\ge 1$.
For upper bounding~\eqref{eq:ps}, we consider the dominating Markov chain
$([\overline{E}_1^{k-r+1}(t),\allowbreak\overline{C}_{k-r+2}(t),\dotsc,\allowbreak\overline{C}_k(t)])_{t=0,1,\dotsc,N(0)}$
 which satisfies
%$\overline{E}_1^{k-r+1}(0)=\sum_{j=1}^{k-r+1}jC_j(0)$,
$\overline{E}_1^{k-r+1}(0)=E_1^{k-r+1}(0)$,
$\overline{C}_j(0)=C_j(0)$ for $j=k-r+2,\dotsc,k$ and
\begin{equation}
\begin{split}
\overline{C}_k(t+1)&= \overline{C}_k(t)\\
\overline{C}_{j}(t+1)&= \overline{C}_{j}(t) + \overline{R}_{j+1}(t),\hspace{2em} \text{for}\hspace{1em} j=k-r+2,k-r+3,\dotsc,k-1\\
\overline{E}_1^{k-r+1}(t+1)&= \overline{E}_1^{k-r+1}(t) - 1 + (k-r+1)\overline{R}_{k-r+2}(t)\\
\end{split}
\label{eq:MarkovU}
\end{equation}
where
\begin{align*}
\overline{R}_j(t) &\sim \mathrm{Binom}\left(\overline{C}_j(t), \frac{j}{N(t)}\right),\hspace{2em} \text{for}\hspace{1em} j=k-r+2,k-r+3,\dotsc,k.
\end{align*}
The dominating Markov chain does not have the conditioning $\overline{E}_1^{k-r+1}(t)\ge 1$ which appears in~\eqref{eq:MarkovE}.
Hence, it is easier to analyze the dominating Markov chain~\eqref{eq:MarkovU} than to analyze the original Markov chain~\eqref{eq:MarkovE}.
Obviously,~\eqref{eq:ps} is upper bounded by
\begin{equation}
\Pr\left(\overline{E}_1^{k-r+1}(0)\ge 1,\overline{E}_1^{k-r+1}(1)\ge 1,\dotsc,\overline{E}_1^{k-r+1}(t-1)\ge 1\right).
\label{eq:psU}
\end{equation}
%The following theorem is proved in Appendix~\ref{apdx:evo}.
While it is easy to derive and analyze recurrence equations of the expectations of the dominating Markov chain~\eqref{eq:MarkovU},
we will derive and analyze recurrence equation of the moment generating function of the dominating Markov chain~\eqref{eq:MarkovU}
for precise analysis.
By the analysis of the moment generating function in Section~\ref{sec:gen},
asymptotic behavior of the moment generating function of $\overline{E}_1^{k-r+1}(t)$ can be derived 
for $t=\Theta(\ell(n))$.

\begin{theorem}[Moment generating function of $\overline{E}_1^{k-r+1}(t)$]
\label{thm:UG}
Assume $m(n)=\mu\frac{n^{r-1}}{\ell(n)^{r-2}}$ for arbitrary constant $\mu$ and $\ell(n)\in \omega(1)\cap o(n)$.
Then, for any constants $\tau>0$ and $\lambda$, it holds
$\mathbb{E}[\exp\{\lambda\overline{E}_1^{k-r+1}(\lfloor\tau \ell(n)\rfloor)\}]=
\exp\{ \varphi_{k,r}(\mu,\lambda,\tau) \ell(n)\allowbreak+ O(\max\{1,\ell(n)^2/n\})\}$
where $\varphi_{k,r}(\mu,\lambda,\tau)$ is defined in~\eqref{eq:Ek}.
\end{theorem}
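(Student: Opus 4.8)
The plan is to propagate the joint moment generating function of the whole dominating chain~\eqref{eq:MarkovU} through one step and to exploit its feed-forward (triangular) structure. Write $s:=k-r+1$, and for a coefficient vector $\mathbf{b}=(b_{s+1},\dots,b_k)$ set
\[
g_t(\lambda,\mathbf{b}):=\mathbb{E}\Bigl[\exp\Bigl\{\lambda\,\overline{E}_1^{k-r+1}(t)+\sum_{j=s+1}^{k}b_j\,\overline{C}_j(t)\Bigr\}\Bigr].
\]
Conditioning on the history up to time $t$, substituting~\eqref{eq:MarkovU}, and using that the $\overline{R}_j(t)$ are conditionally independent binomials with $\mathbb{E}\bigl[e^{\theta\,\mathrm{Binom}(N,p)}\bigr]=(1+p(e^\theta-1))^{N}$, one gets the exact identity $g_{t+1}(\lambda,\mathbf{b})=e^{-\lambda}\,g_t(\lambda,T_t\mathbf{b})$, where the map $T_t$ leaves $\lambda$ untouched and acts on the coefficients by
\[
(T_t\mathbf{b})_{s+1}=b_{s+1}+\log\!\Bigl(1+\tfrac{s+1}{N(t)}\bigl(e^{s\lambda}-1\bigr)\Bigr),\qquad
(T_t\mathbf{b})_{i}=b_i+\log\!\Bigl(1+\tfrac{i}{N(t)}\bigl(e^{b_{i-1}}-1\bigr)\Bigr)\quad(s+2\le i\le k).
\]
Iterating from $\mathbf{b}=\mathbf{0}$ gives $\mathbb{E}\bigl[e^{\lambda\overline{E}_1^{k-r+1}(t)}\bigr]=g_t(\lambda,\mathbf{0})=e^{-\lambda t}\,g_0\bigl(\lambda,\mathbf{b}^{(t)}\bigr)$ with $\mathbf{b}^{(t)}:=(T_0\circ T_1\circ\cdots\circ T_{t-1})(\mathbf{0})$, and $g_0$ is exactly the moment generating function of the multinomial initial configuration.

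Next I would evaluate $\mathbf{b}^{(t)}$ asymptotically for $t=\lfloor\tau\ell(n)\rfloor$. Since $t=O(\ell(n))=o(n)$ we have $N(u)=n\bigl(1+O(\ell(n)/n)\bigr)$ uniformly for $u\le t$, so each $T_u$ is within $O(\ell(n)/n^2)$, coordinatewise, of the single map obtained by replacing $N(u)$ by $n$ and $\log(1+x)$, $e^x-1$ by $x$. That linearized recursion is the discrete analogue of $\dot b_i=(i/n)\,b_{i-1}$ with $b_{s+1}$ driven by the constant $(s+1)(e^{s\lambda}-1)/n$, and solving it level by level (a polynomial-in-$t$ ansatz) gives, for $s+1\le j\le k$,
\[
b^{(t)}_j=\binom{j}{j-s}\,\bigl(e^{s\lambda}-1\bigr)\,(t/n)^{j-s}\,\bigl(1+o(1)\bigr),\qquad\text{in particular}\qquad b^{(t)}_j=O\bigl((\ell(n)/n)^{j-s}\bigr),\quad j-s\ge1.
\]
The three approximation errors (from $N(u)\ne n$, from $\log(1+x)\ne x$, and from $e^x-1\ne x$), accumulated over the $\Theta(\ell(n))$ iterations, together with the discretization of $\tau\ell(n)$, produce only a relative error $O(\ell(n)/n)+O(1/\ell(n))$ in each $b^{(t)}_j$, which I would carry explicitly.

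With $\mathbf{b}^{(t)}$ in hand, $g_0\bigl(\lambda,\mathbf{b}^{(t)}\bigr)=\bigl(1+\sum_{j=1}^{s}p_j(n)(e^{j\lambda}-1)+\sum_{j=s+1}^{k}p_j(n)(e^{b^{(t)}_j}-1)\bigr)^{m(n)}$, so $\log\mathbb{E}\bigl[e^{\lambda\overline{E}_1^{k-r+1}(t)}\bigr]=-\lambda t+m(n)\log(1+X)$ with $X$ the bracketed sum. Using $p_j(n)=\binom{k}{j}(\ell(n)/n)^{k-j}+O\bigl((\ell(n)/n)^{k-j+1}\bigr)$ and $m(n)=\mu\,\frac{n^{r-1}}{\ell(n)^{r-2}}$, the key point is that $m(n)p_j(n)(e^{b^{(t)}_j}-1)=\Theta(\ell(n))$ for \emph{every} $j\in\{s,s+1,\dots,k\}$: the powers of $\ell(n)/n$ contributed by $p_j$ and by $b^{(t)}_j$ (or by $e^{j\lambda}-1$ when $j=s$) always sum to $(k-j)+(j-s)=k-s=r-1$, and $m(n)(\ell(n)/n)^{r-1}=\mu\,\ell(n)$; every other contribution — the terms $j<s$, the $O(\cdot)$ remainders in $p_j$ and in $b^{(t)}_j$, the $O(1)$ from $t=\lfloor\tau\ell(n)\rfloor$, and the $O(m(n)X^2)$ from $\log(1+X)=X+O(X^2)$ — is $O(\max\{1,\ell(n)^2/n\})$. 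Collecting the $\Theta(\ell(n))$ terms yields
\[
-\lambda\tau\ell(n)+\mu\,\ell(n)\bigl(e^{s\lambda}-1\bigr)\sum_{a=0}^{r-1}\binom{k}{s+a}\binom{s+a}{a}\tau^{a}+O\bigl(\max\{1,\ell(n)^2/n\}\bigr),
\]
and the elementary identity $\binom{k}{s+a}\binom{s+a}{a}=\binom{k}{r-1}\binom{r-1}{a}$ (both sides equal $k!/((k-r+1)!\,a!\,(r-1-a)!)$) turns the sum into $\binom{k}{r-1}(1+\tau)^{r-1}$ via the binomial theorem, so the exponent is $\varphi_{k,r}(\mu,\lambda,\tau)\,\ell(n)+O(\max\{1,\ell(n)^2/n\})$. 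Exponentiating gives the claim.

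The genuinely delicate part is the combination of the second and third steps: controlling how the several approximation errors accumulate over the $\Theta(\ell(n))$ iterations defining $\mathbf{b}^{(t)}$, and then how they are amplified by raising to the superlinear power $m(n)$, so that the error in the exponent still collapses to $O(\max\{1,\ell(n)^2/n\})$. Everything else — the one-step moment-generating-function identity, the linear recursion solved by a polynomial ansatz, and the short binomial identity — is routine.
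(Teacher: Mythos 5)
Your proposal is correct and follows essentially the same route as the paper: a one-step recursion for the joint moment generating function of the dominating chain (the paper absorbs the deterministic $-1$ and the factor $k-r+1$ by tracking $(\overline{E}_1^{k-r+1}(t)+t)/(k-r+1)$, while you carry an explicit $e^{-\lambda t}$ factor, which is equivalent), an induction/iteration showing the back-propagated coefficients are $\binom{j}{k-r+1}(e^{(k-r+1)\lambda}-1)(t/n)^{j-k+r-1}$ up to the same relative error $O(\max\{1/\ell(n),\ell(n)/n\})$, and evaluation at the multinomial initial law with the binomial identity collapsing the sum to $\binom{k}{r-1}(1+\tau)^{r-1}$. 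Your error bookkeeping matches the paper's lemma, so no gap.
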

The proof of Theorem~\ref{thm:UG} is shown in Section~\ref{sec:gen}.
%Lemma~\ref{lem:2ML} and Theorem~\ref{thm:ML} can be proved by using Theorem~\ref{thm:UG} and the Chernoff bound.
Now, Theorem~\ref{thm:ML} can be proved by using Theorem~\ref{thm:UG} and the Chernoff bound.

%\begin{proof}[Proof of Lemma~\ref{lem:2ML} and Theorem~\ref{thm:ML}]
\begin{proof}[Proof of Theorem~\ref{thm:ML}]
From the Chernoff bound and Theorem~\ref{thm:UG}, one obtains an inequality
\begin{align*}
&\Pr\left(\overline{E}_1^{k-r+1}(\lfloor\tau \ell(n)\rfloor) \ge 1\right)
\le\Pr\left(\overline{E}_1^{k-r+1}(\lfloor\tau \ell(n)\rfloor) \ge 0\right)\\
&\quad\le\mathbb{E}[\exp\{\lambda\overline{E}_1^{k-r+1}(\lfloor\tau \ell(n)\rfloor)\}]
=\exp\{ \varphi_{k,r}(\mu,\lambda,\tau) \ell(n)+ O(\max\{1,\ell(n)^2/n\})\}
\end{align*}
for any constants $\tau\ge 0$ and $\lambda\ge 0$.
It holds
\begin{align*}
\frac{\partial \varphi_{k,r}(\mu,\lambda,\tau)}{\partial\lambda}
%&=
%\mu \exp\{(k-r+1)\lambda\}(k-r+1)\binom{k}{r-1}(1+\tau)^{r-1}- \tau\\
&=\mu \exp\{(k-r+1)\lambda\}r\binom{k}{r}(1+\tau)^{r-1}- \tau.
\end{align*}
If
\begin{equation}\label{eq:cond}
\left.\frac{\partial \varphi_{k,r}(\mu,\lambda,\tau)}{\partial\lambda}\right|_{\lambda=0}=
\mu r\binom{k}{r}(1+\tau)^{r-1}- \tau < 0
\end{equation}
then $\varphi_{k,r}(\mu,\lambda,\tau)$ is negative for sufficiently small $\lambda>0$ since $\varphi_{k,r}(\mu,0,\tau)=0$.
The condition~\eqref{eq:cond} is satisfied for some $\tau>0$ when
\begin{equation}\label{eq:sup}
\mu
< \frac1{r\binom{k}{r}}\sup_{\tau> 0} \frac{\tau}{(1+\tau)^{r-1}}.
\end{equation}
%When $r=2$ the supremum is taken at $\tau\to+\infty$, and hence the condition~\eqref{eq:sup} is equivalent to $\mu < [k(k-1)]^{-1}$.
When $r\ge 3$, the supremum is taken at $\tau=1/(r-2)$, and hence the condition~\eqref{eq:sup} is equivalent to $\mu < \frac{(r-2)^{r-2}}{\binom{k}{r}r(r-1)^{r-1}}=\mu_{\mathrm{c}}(k,r)$.
%If the condition~\eqref{eq:sup} is satisfied, then~\eqref{eq:cond} is satisfied for any $\tau\in(\tau^*,1/(r-2)]$.
When $\mu<\mu_\mathrm{c}(k,r)$, the inequality~\eqref{eq:cond} is satisfied for any $\tau\in(\tau^*,1/(r-2)]$.
That means that there exists $(k-r+2)$-stopping set of size at least $n-\lfloor(1+\tau)\ell(n)\rfloor$ with high probability
for any $\tau\in(\tau^*,1/(r-2)]$.
%By optimizing the Chernoff bound, one obtains Theorem~\ref{thm:ML} except for the last sentence.
%By optimizing the Chernoff bound, one obtains Theorem~\ref{thm:ML}.
By optimizing $\tau$ and $\lambda$, one obtains Theorem~\ref{thm:ML}.
\end{proof}

%\subsection{Markov chain for the lower bound}\label{subsec:lower}
\subsection{Dominated Markov chain}\label{subsec:lower}
In this section, we prove Theorem~\ref{thm:MU}.
We can use the same argument as Lemma~\ref{lem:2large} in Appendix~\ref{apx:r2} for the $(k-r+2)$-peeling algorithm.
For $m(n)=\mu\frac{n^{r-1}}{\ell(n)^{r-2}}$, it holds
\begin{equation*}
\mathbb{E}[C_{k-r+2}(0)]
=m(n) p_{k-r+2}(n) = \mu \binom{k}{r-2} n + O(\ell(n)).
\end{equation*}
Let as assume that there are 
$\mathbb{E}[C_{k-r+2}(0)]$ number of hyperedge nodes of degree $k-r+2$ with high probability.
In that case, if $\mu > [r(r-1)\binom{k}{r}]^{-1}$,
it holds
$\mathbb{E}[C_{k-r+2}(0)]>([(k-r+2)(k-r+1)]^{-1} + \delta)n$ for sufficiently small $\delta>0$.
Then, from the argument in the proof of Lemma~\ref{lem:2large}, linearly many vertex nodes are removed
 by the $(k-r+2)$-peeling algorithm with high probability.
However, $[r(r-1)\binom{k}{r}]^{-1}$ is strictly larger than $\mu_\mathrm{c}(k,r)$ for $r\ge 3$, and hence is not the sharp threshold.

In the following, we will show that if $\mu>\mu_\mathrm{c}(k,r)$,
for any $\eta>0$ there exists $\tau>0$ such that
\begin{equation}\label{eq:psl}
\Pr\left(E_1^{k-r+1}(0)\ge 1,\dotsc, E_1^{k-r+1}(\lfloor\tau \ell(n)\rfloor-1)\ge 1, E_1^{k-r+1}(\lfloor\tau \ell(n)\rfloor)\ge \eta \ell(n)\right)=1-o(1)
\end{equation}
and that if $\mu>\mu_\mathrm{c}(k,r)$,
there exists sufficiently small $\epsilon >0$ such that for any $\tau\ge 1/(r-2)$,
\begin{equation}\label{eq:p2l}
\Pr\left(C_{k-r+2}(\lfloor\tau \ell(n)\rfloor)>([(k-r+2)(k-r+1)]^{-1}+\epsilon) n\right) = 1-o(1).
\end{equation}
%They give the proof of Theorem~\ref{thm:MU} except for the bound of probability since
If the iteration of the peeling algorithm continues until $\lfloor \tau \ell(n)\rfloor$ steps and if
$E_1^{k-r+1}(\lfloor\tau \ell(n)\rfloor) \ge \eta \ell(n)$ and 
$C_{k-r+2}(\lfloor\tau \ell(n)\rfloor)>([(k-r+2)(k-r+1)]^{-1}+\epsilon) n$
hold,
then from the argument in the proof of Lemma~\ref{lem:2large}, 
$C_{k-r+2}(\lfloor\tau \ell(n)\rfloor)$ number of $(k-r+2)$-uniform hyperedges generate a giant component of size $(1-\rho)n$ for some $\rho\in(0,1)$ with high probability.
In that case, the peeling algorithm removes linearly many vertex nodes
with probability at least $1-\rho^{\eta \ell(n)}$.
Furthermore, from Lemma~\ref{lem:linear},
if $m(n)=\omega(n)$, there is no stopping set of linear size with high probability.
The above argument implies that~\eqref{eq:psl} and \eqref{eq:p2l}
give the proof of Theorem~\ref{thm:MU} except for the bound of the probability.

For lower bounding the probabilities in~\eqref{eq:psl} and~\eqref{eq:p2l}, we consider a dominated Markov chain
$([\underline{E}_1^{k-r+1}(t),\underline{C}_{k-r+2}(t),\allowbreak\dotsc,\underline{C}_k(t)])_{t=0,1,\dotsc,N(0)}$
 which satisfies
$\underline{E}_1^{k-r+1}(0)=\sum_{j=1}^{k-r+1}jC_j(0)$,
$\underline{C}_j(0)=C_j(0)$ for $j=k-r+2,\dotsc,k$ and
\begin{equation}
\begin{split}
\underline{C}_k(t+1)&= \underline{C}_k(t)-\underline{R}_k(t)\\
\underline{C}_{j}(t+1)&= \underline{C}_{j}(t)-\underline{R}_{j}(t) + \underline{R}_{j+1}(t),
\hspace{2em} \text{for}\hspace{1em} j=k-r+2,k-r+3,\dotsc,k-1\\
\underline{E}_1^{k-r+1}(t+1)&= \underline{E}_1^{k-r+1}(t) - 1 -\underline{R}_1^{k-r+1}(t) + (k-r+1)\underline{R}_{k-r+2}(t)\\
\end{split}
\label{eq:MarkovL}
\end{equation}
where
\begin{align*}
\underline{R}_j(t) &\sim \mathrm{Binom}\left(\underline{C}_j(t), \frac{j}{N(t)}\right),\hspace{2em} \text{for}\hspace{1em} j=k-r+2,k-r+3,\dotsc,k\\
\underline{R}_1^{k-r+1}(t) &\sim \mathrm{Binom}\left(\underline{E}_1^{k-r+1}(t)+t, \frac{1}{N(t)-k+r}\right).
\end{align*}
The probabilities~\eqref{eq:psl} and \eqref{eq:p2l} can be lower bounded by replacing the original Markov chain by the dominated Markov chain.
%Obviously,~\eqref{eq:ps} is lower bounded by
%\begin{equation*}
%\Pr\left(\underline{E}_1^{k-r+1}(0)\ge 1,\underline{E}_1^{k-r+1}(1)\ge 1,\dotsc,\underline{E}_1^{k-r+1}(t-1)\ge 1\right).
%\end{equation*}
Indeed, the dominating Markov chain~\eqref{eq:MarkovU} is very close to the dominated Markov chain~\eqref{eq:MarkovL} for $t=O(\ell(n))$.

%We obtain a lower bound of the probabilities in~\eqref{eq:psl} and \eqref{eq:p2l} by replacing the original Markov chain by the above new Markov chain.
\begin{theorem}[Moment generating function of $\underline{E}_1^{k-r+1}(t)$]
\label{thm:LG}
Assume $m(n)=\mu\frac{n^{r-1}}{\ell(n)^{r-2}}$ for arbitrary constant $\mu$ and $\ell(n)\in \omega(1)\cap o(n)$.
Then, for any constants $\tau>0$ and $\lambda$, it holds
$\mathbb{E}[\exp\{\lambda\underline{E}_1^{k-r+1}(\tau \ell(n))\}]=
\exp\{ \varphi_{k,r}(\mu,\lambda,\tau) \ell(n)\allowbreak+ O(\max\{1,\ell(n)^2/n\})\}$
where $\varphi_{k,r}(\mu,\lambda,\tau)$ is defined in~\eqref{eq:Ek}.
\end{theorem}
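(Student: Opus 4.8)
The plan is to run the argument that proves Theorem~\ref{thm:UG}, but now on the dominated chain~\eqref{eq:MarkovL}, and to verify that the two families of increments present in~\eqref{eq:MarkovL} but absent from the dominating chain~\eqref{eq:MarkovU} --- the self-removals $\underline{R}_j(t)$ for $j=k-r+2,\dots,k$ and the term $\underline{R}_1^{k-r+1}(t)$ --- only perturb $\ln\mathbb{E}[\exp\{\lambda\,\underline{E}_1^{k-r+1}(\lfloor\tau\ell(n)\rfloor)\}]$ by $O(\max\{1,\ell(n)^2/n\})$, so that the leading term $\varphi_{k,r}(\mu,\lambda,\tau)\ell(n)$ is unchanged and the value of that leading term can be read off from Theorem~\ref{thm:UG}.

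Concretely, I would first introduce the joint moment generating function
\[
H_t(\lambda,\theta_{k-r+2},\dots,\theta_k):=\mathbb{E}\Big[\exp\big\{\lambda\,\underline{E}_1^{k-r+1}(t)+\textstyle\sum_{j=k-r+2}^{k}\theta_j\,\underline{C}_j(t)\big\}\Big],
\]
which must carry a parameter for every $\underline{C}_j$, since the update of $\underline{E}_1^{k-r+1}$ involves $\underline{R}_{k-r+2}(t)$, whose conditional law depends on $\underline{C}_{k-r+2}(t)$, whose update involves $\underline{R}_{k-r+3}(t)$, and so on up to $\underline{C}_k(t)$. Conditioning on the state at time $t$, using that $\underline{R}_{k-r+2}(t),\dots,\underline{R}_k(t),\underline{R}_1^{k-r+1}(t)$ are then independent binomials, and applying $\mathbb{E}[z^{\mathrm{Binom}(n,p)}]=(1-p+pz)^{n}$, I would derive a one-step recurrence $H_{t+1}(\,\cdot\,)=D_t\cdot H_t(\Phi_t(\,\cdot\,))$, where $D_t=\exp\{-\lambda\}\exp\{t\ln(1+(e^{-\lambda}-1)/(N(t)-k+r))\}$ is deterministic and $\Phi_t$ shifts each argument by an explicit amount of order at most $1/N(t)$. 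Unrolling from $t=\lfloor\tau\ell(n)\rfloor$ down to $t=0$, and using that $[\underline{E}_1^{k-r+1}(0),\underline{C}_{k-r+2}(0),\dots,\underline{C}_k(0)]$ is a fixed linear image of the multinomial $[C_0(0),\dots,C_k(0)]$ --- so that $H_0$ has the closed form $\big(p_0(n)+\sum_{j\le k-r+1}p_j(n)e^{j\lambda}+\sum_{j\ge k-r+2}p_j(n)e^{\theta_j}\big)^{m(n)}$ --- gives a closed expression for $\mathbb{E}[\exp\{\lambda\,\underline{E}_1^{k-r+1}(\lfloor\tau\ell(n)\rfloor)\}]$.

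The second step is the comparison with~\eqref{eq:MarkovU}. The recurrence produced by the dominating chain has the same shape, except that (i) its shift map uses $\ln(1+\tfrac{j}{N(t)}(e^{\xi}-1))$ where the dominated chain uses $\ln(1+\tfrac{j}{N(t)}(e^{\xi-\theta_j}-1))$ for the appropriate neighbouring parameter $\xi$ --- a discrepancy of size $O(\theta_j/N(t))$ per step --- and (ii) it lacks the $\lambda$-shift and the deterministic $(\,\cdot\,)^t$ factor coming from $\underline{R}_1^{k-r+1}$. On the window $t\le\tau\ell(n)$ the accumulated parameters stay small, of order $\theta_{k-r+1+s}=O((\ell(n)/n)^{s})$ for $s=1,\dots,r-1$, so each discrepancy is $O(1/N(t))$ times a bounded factor; summing over $\lfloor\tau\ell(n)\rfloor$ steps, with a discrete Gr\"onwall estimate to control how an early perturbation of the $\Phi_t$-orbit feeds into the later shifts, the total effect on $\ln H_0$ and on $\sum_t\ln D_t$ is $O(\ell(n)^2/n)$. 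Hence $\mathbb{E}[\exp\{\lambda\,\underline{E}_1^{k-r+1}(\lfloor\tau\ell(n)\rfloor)\}]$ and $\mathbb{E}[\exp\{\lambda\,\overline{E}_1^{k-r+1}(\lfloor\tau\ell(n)\rfloor)\}]$ share the same leading exponent, which Theorem~\ref{thm:UG} identifies as $\varphi_{k,r}(\mu,\lambda,\tau)\ell(n)$; this proves the claim for all constant $\lambda$ at once, positive and negative alike, which is what Theorems~\ref{thm:ML} and~\ref{thm:MU} require.

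The hard part will be the multi-scale bookkeeping: the parameters $\theta_{k-r+2},\dots,\theta_k$ relax to values of very different orders in $\ell(n)/n$, yet they multiply the variables $\underline{C}_{k-r+2}(0),\dots,\underline{C}_k(0)$, whose means range from $\Theta(n)$ up to $\Theta(m(n))=\Theta(n^{r-1}/\ell(n)^{r-2})$, and one must check that every product lands at the common scale $\ell(n)$ and that every perturbation is genuinely $O(\ell(n)^2/n)$ rather than $\Omega(\ell(n))$. A second, milder point is that $\underline{R}_1^{k-r+1}(t)\sim\mathrm{Binom}(\underline{E}_1^{k-r+1}(t)+t,\,1/(N(t)-k+r))$ involves the random quantity $\underline{E}_1^{k-r+1}(t)$, which one might fear needs an a priori bound $\underline{E}_1^{k-r+1}(t)=O(\ell(n))$; but since $\underline{E}_1^{k-r+1}(t)$ is precisely the variable already sitting in $H_t$, the factor $(1+(e^{-\lambda}-1)/(N(t)-k+r))^{\underline{E}_1^{k-r+1}(t)+t}$ is simply absorbed as one more $O(1/N(t))$ shift of $\lambda$ together with the deterministic $(\,\cdot\,)^t$ piece, so no circularity arises. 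A monotone coupling of~\eqref{eq:MarkovL} and~\eqref{eq:MarkovU} would at once give $\underline{E}_1^{k-r+1}(t)\le\overline{E}_1^{k-r+1}(t)$, hence one side of the estimate for $\lambda>0$ for free, but it does not control the negative-$\lambda$ regime, which is why I would carry out the recurrence argument in full.
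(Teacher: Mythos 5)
Your proposal is correct and matches the paper's intent: the paper omits this proof precisely because it is the argument of Theorem~\ref{thm:UG} (the evolution of the joint moment generating function, unrolled back to the multinomial initial condition) rerun on the chain~\eqref{eq:MarkovL}, and your bookkeeping correctly shows that the extra self-removal terms and the $\underline{R}_1^{k-r+1}(t)$ term only shift the parameters by $O(1/N(t))$ per step and hence perturb the log-MGF by $O(\max\{1,\ell(n)^2/n\})$, leaving the leading term $\varphi_{k,r}(\mu,\lambda,\tau)\ell(n)$ unchanged. Your scale check that each $\theta_j$-perturbation times $\mathbb{E}[\underline{C}_j(0)]$ lands at $O(\ell(n)^2/n)$ is exactly the verification needed, so this is essentially the same route as the paper's.
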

The proof is omitted since it is straightforward from the proof of Theorem~\ref{thm:UG}.
From Theorem~\ref{thm:LG}, if $\mu>\mu_\mathrm{c}(k,r)$,
it holds
\begin{align}
&\Pr\left(\bigcup_{t=0}^{\lfloor\tau \ell(n)\rfloor-1}\underline{E}_1^{k-r+1}(t)\le 0\right) \le
\sum_{t=0}^{\lfloor\tau \ell(n)\rfloor-1}\Pr\left(\underline{E}_1^{k-r+1}(t)\le 0\right)\nonumber\\
&\qquad\le \sum_{t=0}^{\lfloor\tau \ell(n)\rfloor-1} \inf_{\lambda<0}\mathbb{E}\left[\exp\left\{\lambda\underline{E}_1^{k-r+1}(t)\right\}\right]\nonumber\\
&\qquad\le \exp\left\{\sup_{\tau'>0}\inf_{\lambda<0}\left\{\varphi_{k,r}(\mu,\lambda,\tau')\right\}\ell(n)+ O(\max\{\ell(n)^2/n,\log \ell(n)\})\right\}.
\label{eq:supinf}
\end{align}
Note that the above upper bound is independent of $\tau$.
In the same way, one can show that if $\mu>\mu_\mathrm{c}(k,r)$, for any $\eta >0$ and any $c>0$, there is sufficiently large $\tau>0$, such that
\begin{align*}
&\Pr\left(\underline{E}_1^{k-r+1}(\lfloor\tau \ell(n)\rfloor)< \eta \ell(n)\right)\le \exp\{-c \ell(n)\}.
\end{align*}
%for some constant $c_{\tau,\eta}>0$ depending on $\tau$ and $\eta$ which tends to infinity as $\tau\to\infty$ while $\eta$ is fixed.

Similarly to Theorem~\ref{thm:LG}, asymptotic analysis of the moment generating function for $\underline{C}_{k-r+2}(t)$
 is obtained for $t=\Theta(\ell(n))$.
\begin{theorem}[Moment generating function of $\underline{C}_j(t)$]\label{thm:LG2}
Assume $m(n)=\mu\frac{n^{r-1}}{\ell(n)^{r-2}}$ for arbitrary constant $\mu$ and $\ell(n)\in \omega(1)\cap o(n)$.
Then, for any $j=k-r+2,\dotsc,k$, for any constants $\tau>0$ and $\lambda_j$,
\begin{align*}
\mathbb{E}[\exp\{\lambda_j\underline{C}_j(\lfloor\tau \ell(n)\rfloor)\}]&=
\exp\Biggl\{ \varphi^{(j)}_{k,r}(\mu,\lambda_j,\tau) \frac{n^{j-k+r-1}}{\ell(n)^{j-k+r-2}}
+ O\left(\frac{n^{j-k+r-1}}{\ell(n)^{j-k+r-1}}\max\left\{1,\frac{\ell(n)^2}{n}\right\}\right)\Biggr\}
\end{align*}
where
\begin{equation*}
\varphi^{(j)}_{k,r}(\mu,\lambda,\tau):= \mu \left(\exp\{\lambda\}-1\right)\binom{k}{k-j}(1+\tau)^{k-j}.
\end{equation*}
\end{theorem}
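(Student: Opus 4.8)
The plan is to run the evolution-of-the-moment-generating-function argument of Theorem~\ref{thm:UG}, now for the tuple $(\underline{C}_j(t),\dotsc,\underline{C}_k(t))$ of the dominated chain~\eqref{eq:MarkovL}; the accounting is in fact lighter here because there is no $\underline{E}_1^{k-r+1}$ coordinate. The first observation is that for $j\ge k-r+2$ the tuple $(\underline{C}_j(t),\dotsc,\underline{C}_k(t))$ is already a Markov chain: in~\eqref{eq:MarkovL} the increments of coordinates of index at least $j$ involve only $\underline{R}_j(t),\dotsc,\underline{R}_k(t)$, whose conditional laws depend on $\underline{C}_j(t),\dotsc,\underline{C}_k(t)$ alone. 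Put $T:=\lfloor\tau\ell(n)\rfloor$ and let $\mathcal{F}_t$ denote the $\sigma$-algebra generated by the chain up to time $t$.

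The engine is the following exact one-step identity. Given $\mathcal{F}_t$, the variables $\underline{R}_i(t)$, $i=j,\dotsc,k$, are conditionally independent with $\underline{R}_i(t)\sim\mathrm{Binom}(\underline{C}_i(t),i/N(t))$, and since $\sum_{i=j}^{k}\mu_i\underline{C}_i(t+1)=\sum_{i=j}^{k}\mu_i\underline{C}_i(t)+\sum_{i=j}^{k}a_i\underline{R}_i(t)$ with $a_j=-\mu_j$ and $a_i=\mu_{i-1}-\mu_i$ for $i>j$,
\[
\mathbb{E}\Bigl[\exp\Bigl\{\sum_{i=j}^{k}\mu_i\underline{C}_i(t+1)\Bigr\}\,\Big|\,\mathcal{F}_t\Bigr]=\exp\Bigl\{\sum_{i=j}^{k}\Bigl(\mu_i+\log\Bigl(1+\frac{i}{N(t)}\bigl(e^{a_i}-1\bigr)\Bigr)\Bigr)\underline{C}_i(t)\Bigr\}.
\]
Hence each step maps a linear functional with coefficient vector $\mu$ to one with a deterministically updated vector; starting from $(\lambda_j,0,\dotsc,0)$ at time $T$ and iterating this identity with the tower property produces a deterministic vector $(\mu_j^{(0)},\dotsc,\mu_k^{(0)})$ such that $\mathbb{E}[\exp\{\lambda_j\underline{C}_j(T)\}]=\mathbb{E}[\exp\{\sum_{i=j}^{k}\mu_i^{(0)}C_i(0)\}]$, using $\underline{C}_i(0)=C_i(0)$.

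Next I would estimate $(\mu_j^{(0)},\dotsc,\mu_k^{(0)})$. Because $T=\Theta(\ell(n))=o(n)$ one has $N(t)=n+O(\ell(n))$ uniformly for $0\le t\le T$, every update increment is $O(1/n)$, and the iterates stay bounded (as $\lambda_j$ is a fixed constant), so $e^{x}-1=x+O(x^{2})$ and $\log(1+x)=x+O(x^{2})$ are legitimate throughout. An induction on the offset $m$ — solving for $\mu_j^{(0)}=\lambda_j+O(\ell(n)/n)$ first, then for $\mu_{j+1}^{(0)}$ in terms of it, and so on, each step unrolling a sum $\sum_t i/N(t)$ — gives, for $1\le m\le k-j$,
\[
\mu_{j+m}^{(0)}=\binom{j+m}{m}\bigl(e^{\lambda_j}-1\bigr)\tau^{m}\Bigl(\frac{\ell(n)}{n}\Bigr)^{m}\Bigl(1+O\bigl(\max\{1/\ell(n),\ell(n)/n\}\bigr)\Bigr).
\]
Since $(C_0(0),\dotsc,C_k(0))\sim\mathrm{Multinom}(m(n),p_0(n),\dotsc,p_k(n))$ one has $\mathbb{E}[\exp\{\sum_{i=j}^{k}\mu_i^{(0)}C_i(0)\}]=(1+\sum_{i=j}^{k}p_i(n)(e^{\mu_i^{(0)}}-1))^{m(n)}$, and feeding in $p_{j+m}(n)=\binom{k}{j+m}(\ell(n)/n)^{k-j-m}(1+O(\ell(n)/n))$, the identity $\binom{k}{j+m}\binom{j+m}{m}=\binom{k}{j}\binom{k-j}{m}$ and the binomial theorem $\sum_{m=0}^{k-j}\binom{k-j}{m}\tau^{m}=(1+\tau)^{k-j}$, the inner sum collapses to $\binom{k}{j}(e^{\lambda_j}-1)(1+\tau)^{k-j}(\ell(n)/n)^{k-j}$ up to a relative error $O(\max\{1/\ell(n),\ell(n)/n\})$. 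Taking $m(n)\log(1+\,\cdot\,)$, using $m(n)(\ell(n)/n)^{k-j}=\mu\,n^{j-k+r-1}/\ell(n)^{j-k+r-2}$ and $\binom{k}{j}=\binom{k}{k-j}$, produces exactly $\varphi^{(j)}_{k,r}(\mu,\lambda_j,\tau)\,n^{j-k+r-1}/\ell(n)^{j-k+r-2}$ in the exponent, with the stated additive error.

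The main obstacle is the error bookkeeping in this last step: one has to check that all the small per-step discrepancies — $N(t)$ versus $n$, the floor $T=\tau\ell(n)+O(1)$, the quadratic Taylor remainders, and the effect of raising a multinomial generating function to the power $m(n)$ — survive the $k-j$ layers of the backward recursion and combine into the single bound $O\bigl(n^{j-k+r-1}\ell(n)^{-(j-k+r-1)}\max\{1,\ell(n)^{2}/n\}\bigr)$; this is the same kind of estimate already carried out in the proof of Theorem~\ref{thm:UG}, so it needs care but no new idea. A genuine special case is the top index $j=k$, where $p_k(n)=1-O(\ell(n)/n)$ is not $o(1)$, so the $i=k$ layer of the last expansion must be treated by keeping the logarithm rather than linearizing it; otherwise the scheme is identical. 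A minor auxiliary point is to verify that the estimate for $\mu_{j+m}^{(t)}$ holds uniformly in $t\le T$, which it does since the drift is $O(1/n)$ per step over only $O(\ell(n))$ steps.
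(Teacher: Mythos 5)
Your proposal is correct and follows essentially the same route as the paper, which omits this proof precisely because it is the evolution-of-the-moment-generating-function argument of Theorem~\ref{thm:UG} applied to the autonomous sub-chain $(\underline{C}_j,\dotsc,\underline{C}_k)$ of~\eqref{eq:MarkovL}: a backward recursion on the exponent vector, the induction giving $\mu_{j+m}^{(0)}\approx\binom{j+m}{m}\bigl(e^{\lambda_j}-1\bigr)\tau^m(\ell(n)/n)^m$, and the multinomial initial condition collapsed via $\binom{k}{j+m}\binom{j+m}{m}=\binom{k}{j}\binom{k-j}{m}$ and the binomial theorem. Your remark about the top index $j=k$ (where $p_k(n)\not\to 0$, so the final logarithm cannot be linearized and the Poisson-type main term must be replaced by the exact $m(n)\log\bigl(1+p_k(n)(e^{\lambda}-1)\bigr)$) is a genuine boundary caveat of the statement itself, but it does not affect the case $j=k-r+2\le k-1$ actually used in the paper.
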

The proof of this theorem is also omitted since it is straightforward from the proof of Theorem~\ref{thm:UG}.
From Theorem~\ref{thm:LG2}, for any $\tau>1/(r-2)$, it holds
\begin{align*}
&
\Pr\left(\underline{C}_{k-r+2}(\lfloor\tau \ell(n)\rfloor)\le([(k-r+2)(k-r+1)]^{-1}+\epsilon)n\right)\\
&\le
\frac{\mathbb{E}\left[\exp\left\{\lambda_{k-r+2} \underline{C}_{k-r+2}(\lfloor\tau \ell(n)\rfloor)\right\}\right]}{\exp\{\lambda_{k-r+2}([(k-r+2)(k-r+1)]^{-1}+\epsilon)n\}}\\
&\le
\frac{\mathbb{E}\left[\exp\left\{\lambda_{k-r+2} \underline{C}_{k-r+2}(\lfloor \ell(n)/(r-2)\rfloor)\right\}\right]}{\exp\{\lambda_{k-r+2}([(k-r+2)(k-r+1)]^{-1}+\epsilon)n\}}\\
&=
\exp\biggl\{\mu \left(\exp\{\lambda_{k-r+2}\}-1\right)\frac{1}{(k-r+2)(k-r+1)}\binom{k}{r}\frac{r(r-1)^{r-1}}{(r-2)^{r-2}} n\\
&\qquad - \lambda_{k-r+2}([(k-r+2)(k-r+1)]^{-1}+\epsilon)n\biggr\}\\
&=
\exp\biggl\{[(k-r+2)(k-r+1)]^{-1}\frac{\mu}{\mu_\mathrm{c}(k,r)} \left(\exp\{\lambda_{k-r+2}\}-1\right) n\\
&\qquad - \lambda_{k-r+2}([(k-r+2)(k-r+1)]^{-1}+\epsilon)n\biggr\}
\end{align*}
for any $\lambda_{k-r+2}\le 0$.
Hence, if $\mu>\mu_\mathrm{c}(k,r)$, for sufficiently small $\epsilon>0$, there is $\delta>0$ such that
\begin{align*}
\Pr\left(\underline{C}_{k-r+2}(\ell(n)/(r-2))\le([(k-r+2)(k-r+1)]^{-1}+\epsilon)n\right)&\le
\exp\{-\delta n\}.
\end{align*}
The probability that the peeling algorithm does not remove linearly many vertex nodes is dominated by~\eqref{eq:supinf}.
%Since the probability that the peeling algorithm does not remove linearly many vertex nodes is dominated by~\eqref{eq:supinf}, 
%one obtains Theorem~\ref{thm:MU}.
%
%Note that it holds
By calculation of the saddle point, one obtains~\eqref{eq:rho} and \eqref{eq:tau}.
\if0
\begin{align*}
&\exp\left\{\sup_{\tau >0}\inf_{\lambda<0}\allowbreak \left\{\varphi_{k,r}(\mu,\lambda,\tau)\right\} \ell(n) + O(\max\{\ell(n)^2/n,\,\log \ell(n)\})\right\}\\
&=\rho^{\frac{1-(r-2)\tau}{r-1}\ell(n) + O(\max\{\ell(n)^2/n,\,\log \ell(n)\})}
\end{align*}
where $(\rho,\tau)$ is the solution of
\begin{align*}
\rho &= \exp\left\{\mu \binom{k}{r-2}(1+\tau)^{r-2}(k-r+2) \left(\rho^{k-r+1}-1\right) \right\}\\
\mu\rho^{k-r+1} &= \frac{\tau}{\binom{k}{r}r(1+\tau)^{r-1}}.
\end{align*}
\fi
%from Sion's minimax theorem.

\if0
\section{Discussions}
The main results, Theorems~\ref{thm:main00} and~\ref{thm:main11}, hold also for attacks based on the belief propagation (BP),
which is well-known message-passing algorithm since its failure is also characterized by the existence of stopping set~\cite{mct}.

Since the BasicLP is one of the simplest LP relaxations,
the generalization of Theorem~\ref{thm:main00} to Sherali-Adams LP or Sherali-Adams$^+$ SDP relaxations for $\ell(n)=\omega(\sqrt{n})$
would be interesting open problem.
For that goal, it is sufficient to show that for some subset $\underline{S}$ of the largest stopping set $S$ which is not much smaller than $S$,
the bipartite graph spanned by $\underline{S}$ has certain types of boundary-expanding properties~\cite{v008a012}, \cite{odonnell2014goldreich}.
In that case, Theorem~\ref{thm:main00} can be also generalized to Lov\'{a}sz-Schrijver$^+$ SDP hierarchy~\cite{odonnell2014goldreich}, \cite{tulsiani2013ls+}.

Analysis for $\mu$ close to the threshold constant $\mu_\mathrm{c}(k,r)$ using the weak convergence 
to the normal distribution of normalized number of constraints in the peeling algorithm
is also an interesting problem~\cite{3558/THESES}.

It has been known that there are many applications of the peeling algorithm~\cite{Luby:1997:PLC:258533.258573},
 \cite{dietzfelbinger2010tight} \cite{mitzenmacher2013simple}.
The results of this paper might be useful for other applications.
\fi

\section{Evolution of the moment generating function}\label{sec:gen}
In this section, the proof of Theorem~\ref{thm:UG} is shown.
The moment generating function for $[(\overline{E}_1^{k-r+1}(t)+t)/(k-r+1), \overline{C}_{k-r+2}(t),\dotsc,\overline{C}_k(t)]$ is defined as
\begin{align*}
&\overline{f}_t(\lambda_{k-r+1},\dotsc,\lambda_k)\\
&\,:=
\mathbb{E}\left[
\exp\left\{\lambda_{k-r+1}(\overline{E}_1^{k-r+1}(t)+t)/(k-r+1)+\lambda_{k-r+2}\overline{C}_{k-r+2}(t)+
\dotsb+\lambda_k\overline{C}_k(t)\right\}\right].
\end{align*}
From~\eqref{eq:MarkovU}, one obtains a recursive formula
\begin{align*}
&\overline{f}_{t+1}(\lambda_{k-r+1},\dotsc,\lambda_k)\\
%&=
%\mathbb{E}\left[\exp\left\{\lambda_{k-r+1}(\overline{E}_1^{k-r+1}(t+1)+t+1)/(k-r+1)+\lambda_{k-r+2}\overline{C}_{k-r+2}(t)+\dotsb+\lambda_k\overline{C}_k(t+1)\right\}\right]\\
&=\mathbb{E}\Biggl[\exp\left\{\lambda_{k-r+1}(\overline{E}_1^{k-r+1}(t)+t)/(k-r+1)+\lambda_{k-r+2}\overline{C}_{k-r+2}(t)+\dotsb+\lambda_k\overline{C}_k(t)\right\}\\
&\qquad\cdot
\exp\left\{\lambda_{k-r+1}\overline{R}_{k-r+2}(t) + \lambda_{k-r+2}\overline{R}_{k-r+3}(t)+\dotsb+\lambda_{k-1}\overline{R}_k(t)\right\}\Biggr]\\
&=\mathbb{E}\Biggl[\exp\left\{\lambda_{k-r+1}(\overline{E}_1^{k-r+1}(t)+t)/(k-r+1)+\lambda_{k-r+2}\overline{C}_{k-r+2}(t)+\dotsb+\lambda_k\overline{C}_k(t)\right\}\\
&\quad\cdot\prod_{j=k-r+2}^{k}\left(1-\frac{j}{N(t)}+\frac{j}{N(t)}\exp\{\lambda_{j-1}\}\right)^{\overline{C}_j(t)}\Biggr]\\
&= \overline{f}_t(\lambda_{k-r+1},\lambda'_{k-r+2},\dotsc,\lambda'_k)
%&= \overline{f}_t(\lambda_{k-r+1},\psi^{(t)}_{k-r+2}(\lambda_{k-r+2}),\psi^{(t)}_{k-r+3}(\lambda_{k-r+3}),\dotsc,\psi^{(t)}_k(\lambda_k))
\end{align*}
where
\begin{align*}
%\psi^{(t)}_j(\lambda_j) &:= \lambda_j + \log\left(1-\frac{j}{N(t)}+\frac{j}{N(t)}\exp\{\lambda_{j-1}\}\right)
\lambda'_j &:= \lambda_j + \log\left(1-\frac{j}{N(t)}+\frac{j}{N(t)}\exp\{\lambda_{j-1}\}\right)
\end{align*}
for $j=k-r+2,k-r+3,\dotsc,k$.
Let $\lambda_{k-r+1}^{(s)}:= \lambda_{k-r+1}$ for $s=1,2,\dotsc,t$.
For $j=k-r+2,k-r+3,\dotsc,k$,
let
$\lambda^{(0)}_j:= 0$ and
\begin{align*}
\lambda^{(s)}_j &:= \lambda^{(s-1)}_j + \log\left(1-\frac{j}{N(t-s+1)}+\frac{j}{N(t-s+1)}\exp\{\lambda^{(s-1)}_{j-1}\}\right)
\end{align*}
for $s=1,2,\dotsc,t$.
Then, it holds
\begin{align*}
\mathbb{E}[\exp\{\lambda_{k-r+1} (\overline{E}_1^{k-r+1}(t)+t)/(k-r+1)\}]
&=\overline{f}_{t}(\lambda_{k-r+1},0,\dotsc,0)\\
&= \overline{f}_0\bigl(\lambda_{k-r+1}^{(t)},\lambda_{k-r+2}^{(t)},\dotsc,\lambda_k^{(t)}\bigr).
\end{align*}

\begin{lemma}
For $t=O(\ell(n))$ and $\ell(n)=o(n)$, it holds
\begin{align*}
\exp\{\lambda_j^{(t)}\} &=
 1 + \binom{j}{k-r+1}\frac{t^{j-k+r-1}}{n^{j-k+r-1}}\left(\exp\{\lambda_{k-r+1}\}-1\right)\\
&\quad + O\left(\frac{\ell(n)^{j-k+r-2}}{n^{j-k+r-1}}\max\left\{1,\frac{\ell(n)^2}{n}\right\}\right)
\end{align*}
for $j=k-r+1,k-r+2,\dotsc,k$.
\begin{proof}
The lemma is shown by induction on $j$.
The lemma obviously holds for $j=k-r+1$.
Assume the lemma holds for $j=j_0-1\ge k-r+1$, then
\begin{align*}
&\lambda_{j_0}^{(t)} = \sum_{s=0}^{t-1}\log\left(1-\frac{j_0}{N(t-s)}+\frac{j_0}{N(t-s)}\exp\{\lambda_{j_0-1}^{(s)}\}\right)\\
&= \sum_{s=0}^{t-1}\frac{j_0}{N(t-s)}\left(\exp\{\lambda_{j_0-1}^{(s)}\}-1\right) + O\left(\frac{\ell(n)^{2(j_0-k+r-1)-1}}{n^{2(j_0-k+r-1)}}\right)\\
&= \sum_{s=0}^{t-1}\frac{j_0}{n}\left(\exp\{\lambda_{j_0-1}^{(s)}\}-1\right) + O\left(\frac{\ell(n)^{j_0-k+r}}{n^{j_0-k+r}}\right)\\
&= \sum_{s=0}^{t-1}\frac{j_0}{n}\binom{j_0-1}{k-r+1}\frac{s^{j_0-k+r-2}}{n^{j_0-k+r-2}}(\exp\{\lambda_{k-r+1}\}-1) + O\left(\frac{\ell(n)^{j_0-k+r-2}}{n^{j_0-k+r-1}}\max\left\{1,\frac{\ell(n)^2}{n}\right\}\right)\\
&= \binom{j_0}{k-r+1}\frac{t^{j_0-k+r-1}}{n^{j_0-k+r-1}}(\exp\{\lambda_{k-r+1}\}-1) + O\left(\frac{\ell(n)^{j_0-k+r-2}}{n^{j_0-k+r-1}}\max\left\{1,\frac{\ell(n)^2}{n}\right\}\right).
\qedhere
\end{align*}
\end{proof}
\end{lemma}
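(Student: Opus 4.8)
The plan is to establish the expansion by induction on $j$, running from $j=k-r+1$ up to $j=k$. The base case $j=k-r+1$ is immediate: by definition $\lambda_{k-r+1}^{(s)}=\lambda_{k-r+1}$ for every $s$, so $\exp\{\lambda_{k-r+1}^{(t)}\}=\exp\{\lambda_{k-r+1}\}$, which is exactly $1+\binom{k-r+1}{k-r+1}\,t^{0}\,(\exp\{\lambda_{k-r+1}\}-1)$, i.e.\ the claimed formula holds with zero error term.

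For the inductive step, assume the claim for $j=j_0-1\ge k-r+1$. Unrolling the recursion gives the telescoping identity $\lambda_{j_0}^{(t)}=\sum_{s=0}^{t-1}\log\bigl(1-\tfrac{j_0}{N(t-s)}+\tfrac{j_0}{N(t-s)}\exp\{\lambda_{j_0-1}^{(s)}\}\bigr)$. I would then simplify this sum in three steps, each contributing a controllable error. First, use $\log(1+x)=x+O(x^{2})$ with $x=\tfrac{j_0}{N(t-s)}(\exp\{\lambda_{j_0-1}^{(s)}\}-1)$, which the inductive hypothesis bounds by $O(\ell(n)^{j_0-k+r-2}/n^{j_0-k+r-1})$, so the quadratic remainders summed over $s\le t=O(\ell(n))$ are negligible. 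Second, replace $N(t-s)=n-\ell(n)-(t-s)$ by $n$, at the cost of a relative error $O((\ell(n)+t)/n)$ per term. Third, insert the inductive hypothesis for $\exp\{\lambda_{j_0-1}^{(s)}\}-1$ and use $\sum_{s=0}^{t-1}s^{\,j_0-k+r-2}=\tfrac{t^{j_0-k+r-1}}{j_0-k+r-1}+O(t^{\,j_0-k+r-2})$.

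The leading term produced this way is $\tfrac{j_0}{n}\binom{j_0-1}{k-r+1}\tfrac{1}{n^{j_0-k+r-2}}\cdot\tfrac{t^{j_0-k+r-1}}{j_0-k+r-1}(\exp\{\lambda_{k-r+1}\}-1)$, and the binomial identity $\tfrac{j_0}{j_0-k+r-1}\binom{j_0-1}{k-r+1}=\binom{j_0}{k-r+1}$ turns it into exactly the claimed main term for $\lambda_{j_0}^{(t)}$. Since $j_0-k+r-1\ge 1$, this already shows $\lambda_{j_0}^{(t)}=O(\ell(n)/n)=o(1)$, so expanding $\exp\{\lambda_{j_0}^{(t)}\}=1+\lambda_{j_0}^{(t)}+O((\lambda_{j_0}^{(t)})^{2})$ gives the stated expansion for the exponential, the square term being of lower order.

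The only real work is the error bookkeeping, which I expect to be the main (though routine) obstacle: one must check that the quadratic-log remainder $O(\ell(n)^{2(j_0-k+r-1)-1}/n^{2(j_0-k+r-1)})$, the $N\approx n$ replacement error $O(\ell(n)^{j_0-k+r}/n^{j_0-k+r})$, the $O(t^{\,j_0-k+r-2}/n^{j_0-k+r-1})$ term from the power sum, the propagated inductive error, and the $O((\lambda_{j_0}^{(t)})^{2})$ term all lie in $O\bigl(\tfrac{\ell(n)^{j_0-k+r-2}}{n^{j_0-k+r-1}}\max\{1,\ell(n)^2/n\}\bigr)$. This reduces to a two-case comparison according to whether $\ell(n)^{2}\le n$ or $\ell(n)^{2}>n$, using only $\ell(n)=o(n)$; in each case the required inequalities among the relevant powers of $\ell(n)$ and $n$ follow by elementary manipulation.
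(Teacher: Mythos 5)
Your proposal follows essentially the same route as the paper: the same induction on $j$, the same telescoping sum for $\lambda_{j_0}^{(t)}$, the same three approximations ($\log(1+x)\approx x$, $N(t-s)\approx n$, the power-sum/binomial identity $\tfrac{j_0}{j_0-k+r-1}\binom{j_0-1}{k-r+1}=\binom{j_0}{k-r+1}$), and the same error bookkeeping split on $\ell(n)^2$ versus $n$. The only addition is that you make explicit the final step $\exp\{\lambda_{j_0}^{(t)}\}=1+\lambda_{j_0}^{(t)}+O((\lambda_{j_0}^{(t)})^2)$, which the paper leaves implicit; your argument is correct.
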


Since the random variables $[C_0(0),\dotsc,C_k(0)]$ at the initial step 
obey the multinomial distribution $\mathrm{Multinom}(m(n),\allowbreak p_0(n),\dotsc,p_k(n))$,
it holds for $t=\lfloor\tau \ell(n)\rfloor$ and $m(n)=\mu\frac{n^{r-1}}{\ell(n)^{r-2}}$ that
\begin{align*}
&\overline{f}_0(\lambda_{k-r+1}^{(t)},\lambda_{k-r+2}^{(t)},\dotsc,\lambda_k^{(t)})\\
&=\left(p_0(n) + \sum_{j=1}^{k-r+1}p_j(n)\exp\left\{\frac{j}{k-r+1}\lambda_{k-r+1}\right\}
+ \sum_{j=k-r+2}^k p_j(n)\exp\left\{\lambda_j^{(t)}\right\}\right)^{m(n)}\\
&=\Biggl(1+\sum_{j=k-r+1}^k p_j(n)\binom{j}{k-r+1}\frac{t^{j-k+r-1}}{n^{j-k+r-1}}(\exp\{\lambda_{k-r+1}\}-1)\\
&\qquad + O\left(\frac{\ell(n)^{r-2}}{n^{r-1}}\max\left\{1,\frac{\ell(n)^2}{n}\right\}\right)\Biggr)^{m(n)}\\
&=\Biggl(1+\frac{\ell(n)^{r-1}}{n^{r-1}} \left(\exp\{\lambda_{k-r+1}\}-1\right)\sum_{j=k-r+1}^k \binom{k}{j}\binom{j}{k-r+1}\tau^{j-k+r-1}\\
&\qquad + O\left(\frac{\ell(n)^{r-2}}{n^{r-1}}\max\left\{1,\frac{\ell(n)^2}{n}\right\}\right)\Biggr)^{m(n)}\\
&=\exp\left\{\ell(n)\mu \left(\exp\{\lambda_{k-r+1}\}-1\right)\binom{k}{r-1}(1+\tau)^{r-1}+ O\left(\max\left\{1,\frac{\ell(n)^2}{n}\right\}\right)\right\}.
\end{align*}
From
\begin{align*}
&\mathbb{E}[\exp\{\lambda (\overline{E}_1^{k-r+1}(\lfloor\tau \ell(n)\rfloor)+\lfloor\tau \ell(n)\rfloor)/(k-r+1)\}]\\
&\quad =\exp\left\{\ell(n)\mu \left(\exp\{\lambda\}-1\right)\binom{k}{r-1}(1+\tau)^{r-1}+ O\left(\max\left\{1,\frac{\ell(n)^2}{n}\right\}\right)\right\}
\end{align*}
one obtains
\begin{align*}
\mathbb{E}[\exp\{\lambda \overline{E}_1^{k-r+1}(\lfloor\tau \ell(n)\rfloor)\}]
&=\exp\biggl\{\left[\mu \left(\exp\{(k-r+1)\lambda\}-1\right)\binom{k}{r-1}(1+\tau)^{r-1}-\lambda\tau\right]\\
&\qquad\times \ell(n) + O\left(\max\left\{1,\frac{\ell(n)^2}{n}\right\}\right)\biggr\}.
\end{align*}

\bibliographystyle{IEEEtran}
\bibliography{IEEEabrv,bibliography}
%\bibliography{bibliography}

%beyond CSP

%BP

%SA+

%peeling

\appendix
\section{Proof of Lemma~\ref{lem:linear}}\label{apdx:linear}
Let $S(l)$ be a random variable corresponding to the number of $d$-stopping sets of size $l$ in the randomly generated hypergraph.
Then, the probability that the randomly generated hypergraph includes at least one $\alpha$-linear $d$-stopping set is
upper bounded by using Markov's inequality as
\begin{equation*}
\Pr\left(\sum_{l=\lceil \alpha n\rceil +1}^{\lfloor (1-\alpha) n\rfloor}S(l)\ge 1\right)
\le
\sum_{l=\lceil \alpha n\rceil +1}^{\lfloor (1-\alpha) n\rfloor} \mathbb{E}[S(l)].
\end{equation*}
The expected number of $d$-stopping sets of size $l$ is equal to
\begin{equation*}
\mathbb{E}[S(l)]
=
\binom{n}{l}
\left(\sum_{s=0, d,d+1,\dotsc,k} \frac{\binom{l}{s}\binom{n-l}{k-s}}{\binom{n}{k}}\right)^{m(n)}.
\end{equation*}
Especially for $d=2$, it holds
\begin{equation*}
\mathbb{E}[S(l)]
=
\binom{n}{l}
\left(1-\frac{l\binom{n-l}{k-1}}{\binom{n}{k}}\right)^{m(n)}.
\end{equation*}
When $m(n)=\gamma n$ for some constant $\gamma>0$, it holds
\begin{equation*}
\frac1n\log \mathbb{E}[S(\delta n)]
=
h(\delta)
+
\gamma \log \left(1-k \delta(1-\delta)^{k-1}\right)
+ o(1)
\end{equation*}
for any $\delta\in(0,1)$ where $h$ denotes the binary entropy function.
%Hence, for any fixed $\delta\in(0,\alpha)$, there is a constant $\gamma_\delta$ such that
Hence, for any fixed $\alpha\in(0,1/2)$, there is a constant $\gamma_\alpha$ such that
\begin{equation*}
h(\delta)+
\gamma_\alpha \log \left(1-k \delta(1-\delta)^{k-1}\right)
\le -1
\end{equation*}
for any $\delta\in[\alpha,1-\alpha]$.
Hence,
\begin{equation*}
%\sum_{l=\delta n}^{\alpha n}\mathbb{E}[S(l)]
\sum_{l=\lceil \alpha n\rceil +1}^{\lfloor (1-\alpha) n\rfloor} \mathbb{E}[S(l)]
\le n\exp\{-n+o(n)\}
\end{equation*}
when $m(n)=\gamma_\alpha n$.

\section{Proof of Lemma~\ref{lem:small}}\label{apdx:small}
From an inequality
\begin{equation*}
\log\left(1-\frac{l\binom{n-l}{k-1}}{\binom{n}{k}}\right)
\le
-\frac{l\binom{n-l}{k-1}}{\binom{n}{k}}
\end{equation*}
one obtains for $m(n)=\mu n\log n$ that
\begin{align*}
\sum_{l=1}^{\delta n}\mathbb{E}[S(l)]
&\le
\sum_{l=1}^{\delta n}\binom{n}{l}\exp\left\{-m(n)\frac{l\binom{n-\delta n}{k-1}}{\binom{n}{k}}\right\}
\le\left(1+\exp\left\{-m(n)\frac{\binom{n-\delta n}{k-1}}{\binom{n}{k}}\right\}\right)^n-1\\
&=
\left(1+n^{-\mu k(1-\delta)^{k-1} + o(1)}\right)^n-1
\end{align*}
for any $\delta\in(0,\alpha)$.
Let $\delta_\mu := 1-1/(\mu k)^{1/(k-1)}$.
For any $\mu>1/k$ and any $\delta \in (0,\delta_\mu)$, it holds $\mu k(1-\delta)^{k-1}>1$,
i.e.,
\begin{equation*}
\left(1+n^{-\mu k(1-\delta)^{k-1} + o(1)}\right)^n-1
=O\left(n^{1-\mu k(1-\delta)^{k-1}}\right).
\qedhere
\end{equation*}

\section{Analyses of stopping sets for $r=2$}\label{apx:r2}
In this section, the existence of $\alpha$-large $k$-stopping set is analyzed.
Lemma~\ref{lem:2large} in this section is used in Section~\ref{subsec:lower}.
For $\alpha$-large $k$-stopping set, that corresponds to the case $r=2$, the threshold is obtained as follows.
\begin{lemma}\label{lem:2large}
For any $\mu>(k(k-1))^{-1}$, there exists $\alpha\in(0,1)$
such that
$G_k(n, \mu n, \ell(n))$
does not have $k$-stopping set of size greater than $\alpha n$ with probability exponentially close to 1 with respect to $\ell(n)$.
%the randomly generated bipartite graph including $m(n)=\mu n$ hyperedge nodes does not have
%$\alpha$-large $k$-stopping set in $[n]\setminus[\ell(n)]$ with probability exponentially close to 1 with respect to $\ell(n)$.
\end{lemma}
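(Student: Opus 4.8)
The plan is to reduce the lemma to the classical giant-component phenomenon for random $k$-uniform hypergraphs. First I would record the structural dictionary underlying the $k$-peeling algorithm: when $d=k$, a hyperedge node is deleted the moment it loses even one of its $k$ vertices, so the cascade sweeps out whole connected components. Concretely, write $T$ for the set of $\ell(n)$ vertices removed when forming $G_k(n,\mu n,\ell(n))$; then the $k$-peeling algorithm run on $G_k(n,\mu n,\ell(n))$ terminates with remaining vertex set exactly $\bigcup\{C : C\text{ a connected component of }G_k(n,\mu n,0),\ C\cap T=\emptyset\}$, and more generally a set $S$ of surviving vertices is a $k$-stopping set if and only if it is a union of such components. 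This is immediate from the definition of a $d$-stopping set with $d=k$: an intact hyperedge must lie wholly inside or wholly outside $S$, while a hyperedge that lost a vertex to $T$ must be disjoint from $S$. Consequently the largest $k$-stopping set of $G_k(n,\mu n,\ell(n))$ has size $\sum_{C:\,C\cap T=\emptyset}|C|$, and it suffices to prove that this quantity is at most $\alpha n$ with probability $1-e^{-\Omega(\ell(n))}$ for a suitable constant $\alpha\in(0,1)$.

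Second, I would bring in the giant component. Since $\mu>(k(k-1))^{-1}$, by \cite{schmidt1985component} (see also \cite{RSA:RSA20160}) there is a constant $\rho=\rho(k,\mu)\in(0,1)$ such that $G_k(n,\mu n,0)$ has a connected component of size $(1-\rho)n+o(n)$. I would fix any constant $\alpha$ with $\rho<\alpha<1$ and let $E$ be the event that $G_k(n,\mu n,0)$ has a connected component of size at least $(1-\alpha)n$; then $\Pr[E^c]=e^{-\Omega(n)}$. On $E$, pick such a component $\mathcal{G}$, with $|\mathcal{G}|\ge(1-\alpha)n$. If $\mathcal{G}$ meets $T$, then $\sum_{C:\,C\cap T=\emptyset}|C|\le n-|\mathcal{G}|\le\alpha n$ and we are done; hence the event $\{\sum_{C:\,C\cap T=\emptyset}|C|>\alpha n\}\cap E$ forces the existence of a connected component of size $\ge(1-\alpha)n$ that is disjoint from $T$.

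Third, I would dispatch the latter event by a first-moment estimate that uses the vertex-exchangeability of $G_k(n,\mu n,0)$. For a fixed vertex set $S$, the probability that $S$ is a connected component depends only on $|S|$; therefore, for each size $s$, the expected number of components of size $s$ disjoint from $T$ equals $\binom{n-\ell(n)}{s}/\binom{n}{s}=\binom{n-s}{\ell(n)}/\binom{n}{\ell(n)}$ times the expected number of components of size $s$, and for $s\ge(1-\alpha)n$ this factor is at most $((n-s)/n)^{\ell(n)}\le\alpha^{\ell(n)}$. Summing over $s\ge(1-\alpha)n$ and using that at most $1/(1-\alpha)$ pairwise-disjoint sets can have size $\ge(1-\alpha)n$, the expected number of components of size $\ge(1-\alpha)n$ disjoint from $T$ is at most $\alpha^{\ell(n)}/(1-\alpha)$. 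Markov's inequality together with the previous paragraph then gives
\[
\Pr\!\left[\text{largest }k\text{-stopping set of }G_k(n,\mu n,\ell(n))>\alpha n\right]\ \le\ \Pr[E^c]+\frac{\alpha^{\ell(n)}}{1-\alpha}\ =\ e^{-\Omega(n)}+e^{-\Omega(\ell(n))}\ =\ e^{-\Omega(\ell(n))},
\]
where the last step uses $\ell(n)\in\omega(1)\cap o(n)$.

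The combinatorial dictionary of the first paragraph and the first-moment computation of the third are routine; the one step that needs genuine care is the giant-component input in the strong form $\Pr[E^c]=e^{-\Omega(n)}$. The textbook statement only yields $\Pr[E^c]=o(1)$, which would weaken the conclusion to ``with high probability'' and is not enough when $\ell(n)$ is, say, of order $n/\log n$; so I would supply the exponential-in-$n$ tail, either by quoting a large-deviation version of the hypergraph giant-component theorem or by proving it directly, e.g. by a sprinkling argument showing that a $(1-\rho-\epsilon)$-fraction of the vertices lies in a single component with probability $1-e^{-\Omega(n)}$ for every $\epsilon>0$. The remaining ingredients — the choice $\rho<\alpha<1$ and the $\alpha^{\ell(n)}$ hitting factor — are elementary.
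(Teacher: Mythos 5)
Your proposal is correct and follows essentially the same route as the paper: both reduce the statement to the giant-component theorem for $G_k(n,\mu n,0)$ with exponential-in-$n$ concentration (cited from the random-hypergraph literature), and both conclude because the probability that the $\ell(n)$ removed vertices all miss a component of linear size is exponentially small in $\ell(n)$ (your first-moment bound $\alpha^{\ell(n)}/(1-\alpha)$ versus the paper's direct bound $(\rho+\delta)^{\ell(n)}$). Your explicit stopping-set/component dictionary and your flagging of the need for the exponential tail on the giant-component event are sound elaborations of steps the paper leaves implicit or handles by citation.
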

\begin{proof}
From the theory of random hypergraphs, if $m(n)=\mu n$ where $\mu> (k(k-1))^{-1}$, then the random hypergraph including $n$ vertices and $m(n)$ hyperedges has a giant component,
which is a connected component of size proportional to $n$,
 with probability approaching to 1 exponentially fast as $n\to\infty$~\cite{schmidt1985component}, \cite{behrisch2007local}.
It is also shown in~\cite{RSA:RSA20160} that the size of giant component is concentrated around $(1-\rho)n$ where $\rho\in(0,1)$ is the unique solution of
\begin{equation*}
\rho=\exp\{\mu k (\rho^{k-1}-1)\}.
\end{equation*}
Hence, the probability that the size of giant component is greater than $(1-\rho-\delta)n$ tends to 1 exponentially fast with respect to $n$
for any $\delta>0$.
If at least one of the $\ell(n)$ vertices are included in the giant component, the giant component is removed by the $k$-peeling algorithm.
In that case, the size of the largest stopping set is at most $(\rho+\delta)n$.
The probability that all of the $\ell(n)$ removed vertex nodes are not included in the giant component is at most
$(\rho+\delta)^{\ell(n)}$.
\end{proof}
\if0
From Lemmas~\ref{lem:lpss},~\ref{lem:small} and \ref{lem:2large},
if all local functions have MDS inverse image of dimension $r-1$ for $r=2$,
then the randomly generated Goldreich's generator including $m(n)=\mu n\log n$ local functions is inverted
by the BasicLP~\eqref{eq:LP} with high probability for any $\mu > 1/k$.
\fi
The converse of Lemma~\ref{lem:2large} is also obtained as follows.
\begin{lemma}\label{lem:2ML}
For any $\mu<(k(k-1))^{-1}$,
%the randomly generated bipartite graph including $m(n)=\mu n$ hyperedge nodes has
%$k$-stopping set of size larger than $n-(1+\tau) \ell(n)$ included in $[n]\setminus[\ell(n)]$ with high probability
$G_k(n, \mu n, \ell(n))$ has
$k$-stopping set of size larger than $n-(1+\tau) \ell(n)$ with high probability
for any $\tau$ strictly larger than
\begin{equation*}
\frac{k(k-1)\mu}{1-k(k-1)\mu}.
\end{equation*}
\end{lemma}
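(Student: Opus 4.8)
The plan is to run the argument proving Theorem~\ref{thm:ML} with $r=2$: that theorem is stated for $r\ge 3$, but the whole machinery of Section~\ref{sec:evo} — the dominating Markov chain~\eqref{eq:MarkovU} and the moment generating function estimate of Theorem~\ref{thm:UG} — is valid for every fixed $r\ge 2$. First I would record the specializations to $r=2$: here $m(n)=\mu\frac{n^{r-1}}{\ell(n)^{r-2}}=\mu n$, the $(k-r+2)$-peeling algorithm is the $k$-peeling algorithm, $\mu_{\mathrm{c}}(k,2)=\frac1{k(k-1)}$, and the quantity $\tau^*:=\frac{k(k-1)\mu}{1-k(k-1)\mu}$ in the statement is exactly the $\tau^*$ of Theorem~\ref{thm:ML} specialized to $r=2$; it is finite and positive precisely because $\mu<\mu_{\mathrm{c}}(k,2)$.

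Fixing $\tau>\tau^*$, I would choose $\tau'\in(\tau^*,\tau)$ and put $t=\lfloor\tau\ell(n)\rfloor$. By~\eqref{eq:ps} the probability that $G_k(n,\mu n,\ell(n))$ has no $k$-stopping set of size larger than $n-\ell(n)-t$ equals $\Pr(E_1^{k-1}(0)\ge 1,\dotsc,E_1^{k-1}(t-1)\ge 1)$, which by~\eqref{eq:psU} and the domination by~\eqref{eq:MarkovU} is at most $\Pr(\overline{E}_1^{k-1}(\lfloor\tau'\ell(n)\rfloor)\ge 1)$ once $n$ is large enough that $\lfloor\tau'\ell(n)\rfloor\le t-1$. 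Then, for every $\lambda\ge 0$, the Chernoff bound together with Theorem~\ref{thm:UG} (with $r=2$) gives
\[
\Pr\bigl(\overline{E}_1^{k-1}(\lfloor\tau'\ell(n)\rfloor)\ge 1\bigr)\le\mathbb{E}\bigl[\exp\{\lambda\overline{E}_1^{k-1}(\lfloor\tau'\ell(n)\rfloor)\}\bigr]=\exp\bigl\{\varphi_{k,2}(\mu,\lambda,\tau')\,\ell(n)+O(\max\{1,\ell(n)^2/n\})\bigr\},
\]
where $\varphi_{k,2}(\mu,\lambda,\tau)=\mu k(1+\tau)\bigl(\exp\{(k-1)\lambda\}-1\bigr)-\lambda\tau$ by~\eqref{eq:Ek}. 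Since $\ell(n)=o(n)$ the error term is $o(\ell(n))$, so it only remains to produce $\lambda>0$ with $\varphi_{k,2}(\mu,\lambda,\tau')<0$. This follows because $\varphi_{k,2}(\mu,0,\tau')=0$ while
\[
\left.\frac{\partial\varphi_{k,2}(\mu,\lambda,\tau')}{\partial\lambda}\right|_{\lambda=0}=\mu k(k-1)(1+\tau')-\tau'<0\qquad\text{whenever }\tau'>\tau^*,
\]
using $1-k(k-1)\mu>0$ (this is the $r=2$ analogue of $\sup_{\tau>0}\tau/(1+\tau)^{r-1}=1$, attained in the limit $\tau\to\infty$ rather than at $\tau=1/(r-2)$). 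Hence $\Pr(\overline{E}_1^{k-1}(\lfloor\tau'\ell(n)\rfloor)\ge 1)=\exp\{-\Theta(\ell(n))\}=o(1)$ as $\ell(n)=\omega(1)$, so with high probability $G_k(n,\mu n,\ell(n))$ has a $k$-stopping set of size $>n-\ell(n)-\lfloor\tau\ell(n)\rfloor\ge n-(1+\tau)\ell(n)$.

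Geometrically this is transparent, and gives a sanity check: the $k$-peeling algorithm removes exactly the connected components of $G_k(n,\mu n,0)$ meeting the $\ell(n)$ deleted vertices, and in the subcritical regime $\mu<\frac1{k(k-1)}$ a typical component has expected size $\frac1{1-k(k-1)\mu}=1+\tau^*$, so the count of removed vertices concentrates near $(1+\tau^*)\ell(n)$. One could instead argue directly by bounding $\sum_{i=1}^{\ell(n)}|C(v_i)|$ via a subcritical branching-process estimate, but the awkward point there is controlling the dependence among the $\ell(n)$ component sizes; the Markov-chain route avoids this. In short there is no serious obstacle — the only things to check are that Theorem~\ref{thm:UG} applies at $r=2$ (it does, by the scope of Section~\ref{sec:evo}), that the error $O(\max\{1,\ell(n)^2/n\})$ is $o(\ell(n))$ (it is, since $\ell(n)\in\omega(1)\cap o(n)$), and the elementary sign computation for $\varphi_{k,2}$ above.
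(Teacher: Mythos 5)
Your proposal is correct and is essentially the paper's own argument: the paper proves Lemma~\ref{lem:2ML} by running the proof of Theorem~\ref{thm:ML} (dominating Markov chain, Theorem~\ref{thm:UG}, Chernoff bound) at $r=2$, noting only that in~\eqref{eq:sup} the supremum of $\tau/(1+\tau)$ is attained as $\tau\to+\infty$, giving the threshold $\mu<[k(k-1)]^{-1}$. Your specializations ($\varphi_{k,2}$, $\tau^*=\frac{k(k-1)\mu}{1-k(k-1)\mu}$, and the sign computation at $\lambda=0$) spell out exactly what the paper leaves implicit.
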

\begin{proof}
The proof is almost same as the proof of Theorem~\ref{thm:ML}.
In~\eqref{eq:sup},
the supremum is taken at $\tau\to+\infty$ when $r=2$, and hence the condition~\eqref{eq:sup} is equivalent to $\mu < [k(k-1)]^{-1}$.
\end{proof}

\if0
From the argument in the proof of Lemma~\ref{lem:2large}, a subgraph of the bipartite graph at the $\tau \ell(n)$-th step
including all variable vertices and all constraint vertices of the degree $k-r+2$ has a giant component with probability exponentially close to 1
with respect to $n$.
The number of variables which can be removed is $E_1^{k-r+1}(\tau \ell(n))$ which is larger than $\eta \ell(n)$ with probability
 at least $1-\exp\{-c_{\tau,\eta}\ell(n)\}$ for sufficiently large $c_{\tau,\eta}$ when sufficiently large $\tau>0$ is chosen.
In that case, the $(k-r+2)$-peeling algorithm removes linearly many variables with probability $1-\rho^{\eta \ell(n)}$ where the size of
the giant component is $(1-\rho)N(\tau \ell(n))$ .
If one chooses sufficiently large $\eta>0$, the probability that the $(k-r+2)$-peeling algorithm fails to remove linearly many variable
is dominated by~\eqref{eq:supinf}.
\fi

\if0
\section{Dominated Markov chain}
\begin{align*}
\mathbb{E}[\exp\{\lambda\underline{C}_k(t)\}]
&=
\mathbb{E}\left[\exp\left\{\lambda C_k(0)-\lambda\sum_{t'=0}^{t-1}R_k(t')\right\}\right]\\
&\le
\mathbb{E}\left[\exp\left\{\lambda C_k(0)-\lambda T_k(t)\right\}\right]
\end{align*}

\begin{lemma}
%Let
%$\chi(\delta):=\limsup_{n\to\infty}\frac1n\log \left(\Pr(\overline{C}_k(t)< (1-\delta)\mathbb{E}[\overline{C}_k(t)])\right)$.
%If $\Pr(\overline{C}_k(t)\ge c\frac{n^{r-1}}{\ell(n)^{r-2}})=1-o(1)$ then
%$\Pr(\underline{C}_k(t)\ge (c-\delta)\frac{n^{r-1}}{\ell(n)^{r-2}})=1-o(1)$ 
%$\Pr(\overline{C}_k(t)\ge c\frac{n^{r-1}}{\ell(n)^{r-2}})\le 
%\Pr(\underline{C}_k(t)\ge (1-\delta)c\frac{n^{r-1}}{\ell(n)^{r-2}})$ 
%Then, $\Pr(\underline{C}_k(t)< (1-\delta-\epsilon)\mathbb{E}[\overline{C}_k(t)]) \le
%\exp\{\chi(\delta) n\}$
%for any $\delta>0$, $\epsilon >0$ and $t=O(\ell(n))$.
$\Pr(\underline{C}_k(t)\ge (\beta-\epsilon)m(n))\ge \Pr(\overline{C}_k(t)\ge \beta m(n))(1-\exp\{-O(m(n)\log (n/t))\})$ for any constant $\beta>0$ and $\epsilon\in(0,\beta)$.
\end{lemma}
\begin{proof}
It holds
\begin{align*}
& \Pr(\underline{C}_k(t)\ge (\beta-\epsilon)m(n))
%\ge \Pr(\underline{C}_k(0)\ge (1-\delta)m(n))\\
\ge  \Pr\left(\underline{C}_k(0)\ge \beta m(n),\, \sum_{t'=0}^{t-1} \underline{R}_k(t)\le \epsilon m(n)\right)\\
&\ge  \Pr\left(\underline{C}_k(0)\ge \beta m(n),\, \sum_{t'=0}^{t-1} \overline{R}_k(t)\le \epsilon m(n)\right)
\ge  \Pr\left(C_k(0)\ge \beta m(n),\, T_k(t)\le \epsilon m(n)\right)\\
&=  \Pr\left(C_k(0)\ge \beta m(n)\right)\Pr\left(T_k(t)\le \epsilon m(n)\right)
%\ge  \Pr(\underline{C}_k(0)\in [a,b]) \Pr\left(\sum_{t'=0}^{t-1} \underline{R}_k(t')\le \delta a\mid \underline{C}_k(0) = b\right)\\
\end{align*}
%Here, $\sum_{t'=0}^{t-1} \overline{R}_k(t)$ is dominated by $\mathrm{Binom}(t C_k(0), j/N(t))$.
%where $T_k(t)$ is independent random variable conditioned on $C_k(0)$ obeying $\mathrm{Binom}(t C_k(0), j/N(t))$.
where $T_k(t)$ is independent random variable obeying $\mathrm{Binom}(t m(n), j/N(t))$.
Since
\begin{equation*}
\Pr\left(T_k(t)> \epsilon m(n)\right)
\le \frac{\left(1-\frac{j}{N(t)} + \frac{j}{N(t)} \exp\{\lambda\}\right)^{t m(n)} }{\exp\{\lambda \epsilon m(n)\}}
\end{equation*}
for any $\lambda>0$,
by choosing $\lambda=\Theta(\log(n/t))$,
the above probability is upper bounded by $\exp\left\{-O\left(m(n) \log \frac{n}{t}\right)\right\}$.
%Hence,
%\begin{equation*}
%\Pr\left(\underline{C}_k(0)\ge (1-\delta)a\right)\Pr\left(T_k(t)\le \epsilon a\right)
%\end{equation*}
\end{proof}
\begin{lemma}
$\Pr(\underline{C}_{k-1}(t)\ge (\beta-\epsilon)m(n)p_{k-1}(n))\ge \Pr(\overline{C}_{k-1}(t)\ge \beta m(n)p_{k-1}(n))(1-\exp\{-O(m(n)p_{k-1}(n)\log (n/t))\})$ for any constant $\beta>0$ and $\epsilon\in(0,\beta)$.
\end{lemma}
\begin{proof}
It holds
\begin{align*}
& \Pr(\underline{C}_{k-1}(t)\ge (\beta-\epsilon)m(n))
\ge  \Pr\left(\underline{C}_{k-1}(0)\ge \beta m(n),\, \sum_{t'=0}^{t-1} \underline{R}_{k-1}(t)\le \epsilon m(n)\right)\\
\end{align*}
\end{proof}
\fi

\end{document}